\newtheorem{lemma}{Lemma}
\newtheorem{cor}{Corollary}
\newtheorem{obs}{Observation}
\newtheorem{theorem}{Theorem}
\theoremstyle{definition}
\newtheorem{defi}{Definition}
\renewenvironment{proof}{\noindent\mbox{\newline\smallskip}\noindent\textbf{Proof:}}{$\hfill\square$\newline}
\newcommand{\optSQ}{\ensuremath{B^*}}
\newcommand{\sq}{\textsc{$(p,k)$-Square Covering}}
\newcommand{\rec}{\textsc{$(p,k)$-Rectangle Covering}}
\newcommand{\true}{TRUE }
\newcommand{\false}{FALSE }
\begin{document}


\title{Covering Points by Disjoint Boxes with Outliers
\thanks{This work was supported by Basic Science Research Program through
the National Research Foundation of Korea (NRF) funded by the
Ministry of Education, Science and Technology (No. 2009-0067195)
and by the Brain Korea 21 Project in 2010.
}
}
\author{Hee-Kap Ahn\footnotemark[1]
\and Sang Won Bae\footnotemark[4]
\and Erik D. Demaine\footnotemark[2]
\and Martin L. Demaine\footnotemark[2]
\and Sang-Sub Kim\footnotemark[1]
\and Matias Korman\footnotemark[3]
\and Iris Reinbacher\footnotemark[1]
\and Wanbin Son\footnotemark[1]
}

\date{}
\footnotetext[1]{Department of Computer Science and Engineering, POSTECH, South Korea. \tt{\{heekap, helmet1981, irisrein, mnbiny\}@postech.ac.kr}}
\footnotetext[2]{ MIT Computer Science and Artificial Intelligence Laboratory. \texttt{\{edemaine, mdemaine\}@mit.edu}}
\footnotetext[3]{Computer Science department, Universit\'e Libre de Bruxelles (ULB), Belgium. \texttt{mkormanc@ulb.ac.be}}
\footnotetext[4]{Department of Computer Science, Kyonggi University, Suwon, Korea. \texttt{swbae@kgu.ac.kr}}

\maketitle


\begin{abstract}
  	For a set of $n$ points in the plane, we consider the
  	axis--aligned \textsc{$(p,k)$-Box Covering} problem: Find $p$ axis-aligned, pairwise-disjoint boxes that together contain at least $n-k$ points. In this paper, we consider the boxes to be either squares or rectangles, and we want
  	to minimize the area of the largest box. For general $p$ we show that the problem is NP-hard for both squares and rectangles. For a small, fixed number $p$, we give algorithms that find the solution in the following running times:
  	For squares we have $O(n+k\log k)$ time for $p=1$, and
  	$O(n\log n+k^p\log^p k)$ time for $p = 2,3$. For rectangles we get $O(n + k^3)$ for $p = 1$ and $O(n\log n+k^{2+p}\log^{p-1} k)$ time for $p = 2,3$.
  	In all cases, our algorithms use $O(n)$ space.
\end{abstract}



\section{Introduction}
    Motivated by clustering, we consider the problem of splitting a large set of points into a small number of groups. From a geometric point of view, we want to group points together that are `close' with respect to some distance measure. It is easy to see that the choice of distance measure directly influences the shape of the clusters. Depending on the application, it may be useful to consider only disjoint clusters.  It is important to take noise into account, especially when dealing with raw data. That means, we may want to remove outliers that are `far' from the clusters, or that would unduly influence their shape.

    In this paper, we consider the following optimization problem:
    Given a set $P$ of $n$ points in the plane and two
    integers $p > 0$ and $k\geq 0$, find $p$ pairwise-disjoint squares or
    rectangles that together contain at least $n-k$ points of $P$ and
    minimize the largest area among the $p$ squares or rectangles. We treat the squares or rectangles as closed sets, and although we want them to be pairwise-disjoint, we allow overlap at their boundaries or corners.

    We call this problem the \textsc{$(p,k)$-Square Covering}
    and the \textsc{$(p,k)$-Rectangle Covering} problem, respectively, according to the
    shape of the covering regions.
    The $k$ points that are not covered by a solution of the problem
    are called \emph{outliers}.

    Both problems are variations and/or extensions of the
    \emph{rectilinear $p$-center problem}. This is usually considered as
    the problem of finding $p$ congruent squares of smallest possible size that
    together contain all points of $P$, where the $p$ squares may overlap.
    In our setting, however, we have (1) that the $p$ regions must not
    overlap each other (except at their boundaries) and (2) that up to a predefined number of $k$ points are considered as outliers and can be ignored.
    It is known that the rectilinear $p$-center problem is
    NP-hard even to approximate within ratio
    $1.5$~\cite{ms-cscgl-84}.
    However, for $p\leq 4$, worst-case optimal-time algorithms are
    known: linear time for $p \leq 3$ and $O(n\log n)$ time for
    $p=4$. For $p\geq 5$, the best known time bound is $O(n^{p-4}
    \log^5 n)$~\cite{sw-rpppp-96}.

    For the \textsc{$(p,0)$-Rectangle Covering} problem, less work has
    been done. Bespamyatnikh and Segal~\cite{bs-cspta-00} presented a
    deterministic $O(n \log n)$ time algorithm for $p=2$, but no efficient
    algorithm for $p\geq 3$ is known. Several
    papers considered variations of the \textsc{$(2,0)$-Rectangle
    Covering} problem --- e.g., arbitrary orientation and three or higher
    dimensions --- and achieved efficient algorithms;
    see for example~\cite{ab-cptdr-08,dgn-skper-05,jk-oicps-96,kks-dr2cp-00,sd-csppu-07}.

    Outliers can also be seen as violation of constraints:
    basically, the points in $P$ are constraints to be covered by
    squares or rectangles in our problems and $k$ of them are
    allowed to be violated. In this sense, there is a connection to geometric
    optimization with violated constraints which has been studied by
    several researchers. Matou\v{s}ek~\cite{m-gofvc-95} and
    Chan~\cite{c-ldlpv-05} presented efficient algorithms for
    \emph{LP-type problems} allowing $k$ violated constraints.
    The class of LP-type problems, which extends linear programming in a
    combinatorial sense, was introduced by Sharir and
    Welzl~\cite{sw-cblpr-92}. Also, a deterministic linear-time algorithm
    for LP-type problems of finite LP-dimension is known~\cite{cm-ltdao-96}.
    The LP-dimension is a parameter associated with an LP-type problem;
    for instance, the \textsc{$(1,0)$-Square Covering}
    problem, or equivalently the rectilinear $1$-center problem, has
    LP-dimension $3$ since the smallest unique enclosing square is
    determined by three points of the given point set.
    Indeed, the rectilinear $p$-center problem for $p \leq 3$ is
    known to be an LP-type problem~\cite{sw-rpppp-96}, so
    linear-time algorithms follow. Thus, the \textsc{$(1,k)$-Square Covering}
    problem can be solved in $O(n \log n + k^2 \log^2 n)$ time and the
    \textsc{$(1,k)$-Rectangle Covering} problem in $O(n \log n +
    k^{\frac{11}{4}} n^{\frac{1}{4}}\log^{O(1)} n)$ time, according
    to Chan~\cite{c-ldlpv-05}. For LP-dimension larger than four, no
    efficient algorithm has been found as to date.
    More details on LP-type problems can be found in Sharir and
    Welzl~\cite{sw-cblpr-92}, Matou\v{s}ek and
    \v{S}kovro\v{n}~\cite{ms-tvlptop-03}, and Dyer et~al.~\cite{dm-lpld-04}.

    Independent of LP-type problems with violated constraints,
    there are some previous results dealing with outliers when $p=1$. Aggarwal
    et~al.~\cite{aiks-fkpmd-91} achieved a running time of
    $O((n-k)^2 n\log n)$ using $O((n-k)n)$ space for both the
    \textsc{$(1,k)$-Square Covering} and the \textsc{$(1,k)$-Rectangle Covering}
    problems. Later, Segal and Kedem~\cite{sk-ekpsa-98} gave an
    $O(n+k^2(n-k))$ time algorithm for the \textsc{$(1,k)$-Rectangle Covering}
    problem using $O(n)$ space. A randomized algorithm that runs in $O(n\log n)$
    time was given for the \textsc{$(1,k)$-Square
    Covering} problem by Chan~\cite{c-garot-99}. Most recently,
    Atanassov et al.~\cite{abcmmpsw-aoor-09} presented an $O(n+k^3)$
    time algorithm for the \textsc{$(1,k)$-Rectangle Covering}
    problem.

    Most of the above algorithms are optimal when the number of outliers
    is either a small constant or close to $n$. In this paper, we
    are interested in algorithms with small running time in $k$.
    Ideally, we would also like to preserve optimality in $n$ for small $k$. We summarize the new results shown in this paper:
\begin{itemize}
\item 	\textit{NP-hardness}: In Section~\ref{sec:nphard}, we prove that both
     the \textsc{$(p,k)$-Square Covering} and the
     \textsc{$(p,k)$-Rectangle Covering} problems are NP-hard when
     $p$ is part of the input, even for a fixed $k\geq 0$.
     These are the first NP-hardness proofs for a variant of the rectilinear
     $p$-center problem where the covering regions are disjoint and
     also for the problem of covering points by $p$ rectangles.

\item	\textit{Efficient algorithms for small $p$}: In Section~\ref{sec:smallp}, we give 	efficient algorithms if the number of boxes is small. All our algorithms use linear space. The running times of our algorithms are summarized in Table~\ref{tab_res}. Recall that the previously best known results for this problem with outliers were restricted to only one box: $O(n\log n)$ for the \textsc{$(1,k)$-Square Covering} problem~\cite{c-garot-99}, and $O(n+k^3)$ for the \textsc{$(1,k)$-Rectangle Covering} problem~\cite{abcmmpsw-aoor-09}.

\begin{table*}[h]
\centering \caption{Running times of our \textsc{$(p,k)$-Square/Rectangle
Covering} algorithms} \label{tab_res}
\begin{tabular}{|c|c|c|} \hline
     & Squares & Rectangles \\ \hline
    $p=1$ &$O(n+k\log k)$ &$O(n+k^3)$ \\ \hline
    $p=2$ &$O(n\log n+k^2\log^2 k)$ &$O(n\log n+k^4\log k)$ \\ \hline
    $p=3$ &$O(n\log n +k^3\log^3k)$ &$O(n\log n+k^5\log k)$ \\ \hline
\end{tabular}
\end{table*}

\end{itemize}

\section{A lower bound}\label{sec:prelim}
	We consider the \textsc{$(p,k)$-Square Covering} and the \textsc{$(p,k)$-Rectangle Covering} problem. Given a set $P$ of $n$ points in the plane, and two integers $k \geq 0$ and $p >0$, find $p$ axis--aligned pairwise--disjoint (overlap of boundaries is allowed), closed squares or rectangles, that together cover at least $n-k$ points of $P$, such that the area of the largest square or rectangle is minimized. We refer to the $k$ points that are not contained in the union of all squares or rectangles as \emph{outliers}.

	The algorithms we present in Section~\ref{sec:algs} are efficient, as we can show the following lower bound that holds for both the \textsc{$(p,k)$-Square Covering} and the \textsc{$(p,k)$-Rectangle Covering} problem.

\begin{lemma}\label{lem:lower}
	Let $k \in \mathbb{N}$ be part of the input and let $p$ be any fixed positive integer. Then, both the $(p,k)$-\textsc{Square Covering} and the $(p,k)$-\textsc{Rectangle Covering} problem have an $\Omega(n \log n)$ lower bound in the algebraic decision tree model.
\end{lemma}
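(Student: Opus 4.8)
The plan is to prove the bound by reduction from a problem that is classically known to require $\Omega(n\log n)$ operations in the algebraic decision tree model, namely \emph{element distinctness}: given $n$ reals $x_1,\dots,x_n$, decide whether they are pairwise distinct. This bound holds because the ``all distinct'' region of $\R^n$ splits into $n!$ connected components, so by the Ben--Or component-counting argument any algebraic decision tree deciding it has depth $\Omega(\log(n!))=\Omega(n\log n)$. I will show that any algorithm for \sq{} or \rec{} yields, after an $O(n)$ branch-free transformation of the input, an algorithm deciding element distinctness, which forces the same lower bound on both covering problems.

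I would first treat $p=1$. Given $x_1,\dots,x_n$, place the points $q_i=(x_i,x_i)$ on the main diagonal and set $k=n-2$, so that a feasible solution is a single box containing at least two of the $q_i$. For both squares and rectangles the smallest box containing $q_i$ and $q_j$ has area $(x_i-x_j)^2$: a square of side $|x_i-x_j|$, respectively the axis-aligned rectangle having $q_i$ and $q_j$ as opposite corners. A box containing three or more of the $q_i$ can only be larger, so the optimal value equals $\min_{i\ne j}(x_i-x_j)^2$. This quantity is $0$ if and only if two of the $x_i$ coincide, so reading off the optimal area of the $(1,n-2)$ instance decides element distinctness.

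To handle an arbitrary but fixed $p\ge 2$, I would augment the instance with $p-1$ ``dummy'' clusters, placed pairwise very far apart and far from all the $q_i$, where each cluster consists of two coincident points and can therefore be covered by a single degenerate box of area $0$. With $N=n+2(p-1)$ points in total I set $k=N-2p=n-2$, so a feasible solution must cover exactly $2p$ points. The central combinatorial claim is that an optimal solution devotes one box to each dummy cluster (covering $2(p-1)$ points at area $0$) and the single remaining box to a pair of the $q_i$, making the optimum equal to $\min_{i\ne j}(x_i-x_j)^2$ once more. Any box covering points of two different dummy clusters, or a dummy point together with some $q_i$, has area far exceeding the diameter of the real instance, and no redistribution of the boxes among the $q_i$ can cover $2p$ points with a smaller maximum area. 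Since $p$ is constant, $N=\Theta(n)$ and the transformation is branch-free of size $O(n)$, so the $\Omega(n\log n)$ bound for element distinctness transfers verbatim to both \sq{} and \rec{}.

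The step I expect to require the most care is precisely this last optimality argument: ruling out \emph{every} way of distributing the $p$ boxes over the dummy clusters and the real points, and verifying that none of them covers $2p$ points with maximum box area below $\min_{i\ne j}(x_i-x_j)^2$. A secondary technical point is the use of area-$0$ boxes for the dummy clusters and in the ``all distinct'' case; these are legitimate under the convention that boxes are closed sets, but if one prefers to forbid degenerate boxes, the reduction can instead be made from $\epsilon$-closeness (deciding whether $\min_{i\ne j}|x_i-x_j|<\epsilon$, also $\Omega(n\log n)$): giving each dummy cluster a fixed gap $\epsilon/2$, so that its box has the strictly positive area $\epsilon^2/4<\epsilon^2$, and asking whether the optimum is below the threshold $\epsilon^2$, removes all degeneracy while preserving the argument.
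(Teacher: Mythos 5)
Your proposal is correct and starts from exactly the same place as the paper's proof: both reduce from element distinctness (the paper calls it 1-dimensional set disjointness), embed the reals as the diagonal points $(x_i,x_i)$, and decide distinctness by testing whether the optimal area is zero. Where you genuinely diverge is in handling a fixed $p\geq 2$. The paper simply sets $k=n-p-1$, so the $p$ boxes must cover $p+1$ points; by pigeonhole some box covers two of them, and two distinct diagonal points differ in both coordinates, so that box has positive area (for squares and rectangles alike), while a repeated element permits an all-degenerate covering. This one-line pigeonhole makes your dummy clusters, and the case analysis over distributions of boxes among clusters and real points, unnecessary. Your route does work, but it carries extra obligations that the paper's avoids and that you only sketch: (i) the allocation argument you flag as delicate (ruling out solutions that skip a dummy cluster) must actually be carried out by a counting argument; (ii) placing dummies ``far from all $q_i$'' must itself be branch-free in the decision-tree model --- computing $\max_i|x_i|$ requires comparisons, though this is fixable by using, e.g., $M=1+\sum_i x_i^2$ and placing the dummy pairs at $(jM,jM)$; and (iii) for rectangles you must ensure no dummy shares an $x$- or $y$-coordinate with a real point, since otherwise a zero-area segment rectangle could cover a dummy pair together with a real point and defeat the count in the all-distinct case --- again handled by putting the dummies on the diagonal. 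In exchange, your construction makes explicit that the optimum equals $\min_{i\neq j}(x_i-x_j)^2$, but the paper's construction yields the same value by the same closest-pair argument, so the net assessment is: same core idea, with the paper's choice of $k=n-p-1$ being the cleaner mechanism for general $p$.
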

\begin{proof}
	We reduce from 1-dimensional set disjointness: Given a sequence $S=\{r_1,\dots,r_n\}$ of $n$ real numbers, we want to decide whether there is any repeated element in $S$. The following works for both squares and rectangles.

	Given the sequence $S$, we generate the point set $\mathcal{S} = \{(r_i, r_i) \mid 1\leq i \leq n\}\in \mathcal{R}^2$. We compute the $p$ minimal squares that cover $\mathcal{S}$, allowing exactly $k=n-p-1$ outliers, which means that the union of the $p$ squares must cover $p+1$ points. Thus, the covering squares  degenerate to points (i.e., squares of side length zero) if and only if there is a repeated element in the sequence. Otherwise, by the pigeon hole principle, one of the covering squares must cover at least two points and hence, has positive area.
\end{proof}

	Similar bounds for slightly different problems were given by Chan~\cite{c-garot-99} ($p=1$) and by Segal~\cite{s-lbcp-02}
	($p=2,k=0$, arbitrary orientation).

\section{NP-Hardness Results}\label{sec:nphard}

	In this section, we show that both the $(p,k)$-\textsc{Square Covering} and the $(p,k)$-\textsc{Rectangle Covering} problems are NP-hard for any fixed $k$ when $p$ is part of the input. In the following, we focus on the decision version of the two problems for $k=0$: Given $n$ points in the plane and an integer $p > 0$, decide whether or not there exist $p$ axis--aligned unit squares or $p$ axis--aligned rectangles of area at most one that together cover all points. We reduce from planar 3-SAT. Note that we are not dealing explicitly with outliers. However, the reduction can be adapted by placing $k$ points at a sufficiently large distance from the other points as not to be included in the covering. Furthermore, note that our reductions work for all possible cases where the squares or rectangles may (not) overlap or need (not) be congruent. The optimal solutions may be different, however, depending on the underlying case.

\subsection{Covering Points with Squares}\label{sec:hard}
	In this section we study the complexity of the \sq\ problem: cover $n-k$ points in the plane with $p$ axis-aligned squares while minimizing  the area of the largest square.

	NP-hardness of the $p$-center problem (i.e., covering with congruent squares which are allowed to overlap) has been shown previously by Fowler et al.~\cite{fpt-opcpa-81}, and by Meggiddo and Supowit~\cite{ms-cscgl-84}. Here we show NP hardness for the case of covering by congruent squares that must not overlap (except at their boundaries).

	We reduce from planar 3-SAT: given a 3-CNF formula $F$ with variables $x_1, \ldots, x_n$ and clauses $c_1, \ldots, c_m$, let $G(F)$ be the {\em graph} of $F$, defined as:
\begin{itemize}
	\item 	$V=\{x_i \mid 1\leq  i \leq n \} \cup \{c_j \mid 1\leq  j \leq m \}$
	\item 	$E=\{(x_i,c_j)\mid x_i \in c_j$ or $\overline{x_i} \in c_j\}$
\end{itemize}

	If $G(F)$ is a planar graph, then $F$ is called a planar $3$-CNF formula. It is NP-hard to decide whether a given planar $3$-CNF formula is satisfiable or not~\cite{l-pftu-82}.

\subsubsection{Reduction}

\begin{figure}
\begin{center}
\includegraphics[width = 0.3\textwidth]{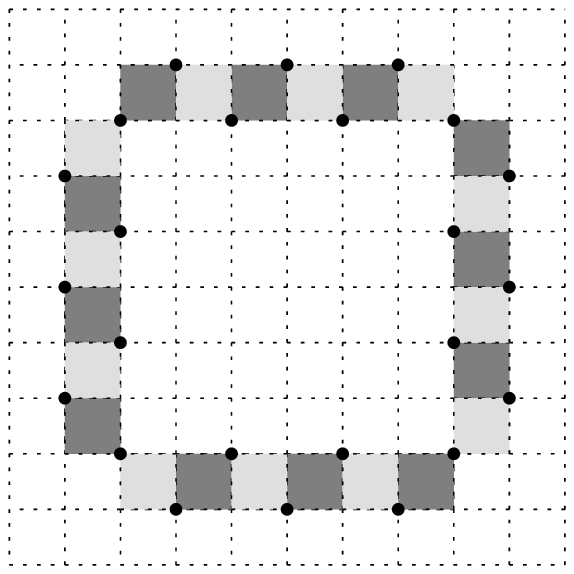}
\includegraphics[width = 0.55\textwidth]{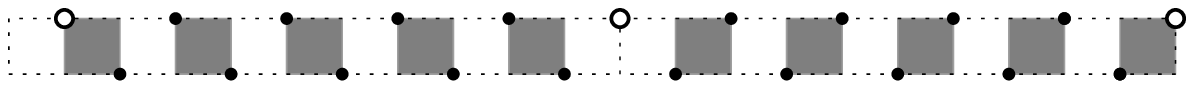}
\caption{Left: Variable gadget consisting of $4N$ points that can be covered in two different ways with $2N$ unit squares (either light or dark grey). Right: Clause gadget with $4M + 1$ points (including three link points - depicted as hollow circles). $2M$ boxes are necessary and sufficient to cover all points except any one of the link points.}
\label{fig_vargadget}
\end{center}
\end{figure}

	Given a planar 3-SAT instance, we construct a \textsc{$(p,0)$-Square Covering} instance on a grid such that the 3-SAT instance is satisfiable if and only if all points can be covered by $p$ unit squares. The reduction is as follows, with all points lying on a grid, such that the $L_{\infty}$ distance between two points in the same grid cell is one unit.

\begin{itemize}
	\item 	For each variable $x_i$, we create a gadget of $4N$ points arranged in a ring-like fashion (where $N$ is a sufficiently large constant). By construction, there are only two different ways of covering all generated points with $2N$ unit squares (see Figure~\ref{fig_vargadget}, left). We associate each of the coverings to an assignment either of \true or \false to the literal, and define the {\em \true region} as the union of squares in the \true assignment, and the {\em \false region} as the union of squares in the \false assignment.

	\item 	For each clause $c_j$, we generate $4M+1$ points in a linear fashion, where $M$ is another large constant. There are three special {\em link points} in the gadget: the rightmost, leftmost and middle points of the linear segment, depicted as hollow circles in Figure~\ref{fig_vargadget}, right.
\end{itemize}

	The main property of the clause gadget is the following:

\begin{lemma}\label{lem_clause}
	To cover all points of a clause gadget except for any one of the three link points, $2M$ unit squares are sufficient and necessary.
\end{lemma}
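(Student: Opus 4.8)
The plan is to recast the covering as a matching problem on the linear arrangement of the gadget points. First I would pin down the crucial local property that any axis-aligned unit square can contain at most two of the gadget points. This follows from the grid construction: two points lying in the same grid cell are at $L_\infty$-distance exactly one and so fit in a common unit square, whereas any three consecutive points already span $L_\infty$-diameter strictly larger than one and hence cannot be simultaneously enclosed. Labeling the points $v_1,v_2,\dots,v_{4M+1}$ along the line and calling a pair $\{v_i,v_{i+1}\}$ \emph{coverable} when a single unit square contains both, the coverable pairs form the edges of a path $P_{4M+1}$, and any family of pairwise-disjoint squares each covering two points corresponds to a matching in this path.

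For necessity the argument is then immediate: we must cover $4M$ points (all points except one link point), and each square accounts for at most two of them, so at least $4M/2 = 2M$ squares are required.

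For sufficiency I would exhibit, for each choice of dropped link point, an explicit matching of size $2M$ in $P_{4M+1}$ leaving exactly that point unmatched, and realize each matched pair by a unit square. The key observation is that the three link points are the two endpoints $v_1,v_{4M+1}$ and the middle point $v_{2M+1}$, which are precisely the odd-indexed vertices of $P_{4M+1}$ and therefore exactly the vertices that a maximum matching of a path can leave uncovered. Concretely, dropping $v_1$ gives the pairing $\{v_2,v_3\},\{v_4,v_5\},\dots,\{v_{4M},v_{4M+1}\}$; dropping $v_{4M+1}$ gives $\{v_1,v_2\},\dots,\{v_{4M-1},v_{4M}\}$; and dropping the middle point $v_{2M+1}$ gives $\{v_1,v_2\},\dots,\{v_{2M-1},v_{2M}\}$ together with $\{v_{2M+2},v_{2M+3}\},\dots,\{v_{4M},v_{4M+1}\}$. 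In each case one obtains exactly $2M$ pairs.

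The step I expect to be the main obstacle is the geometric verification that the $2M$ squares produced by such a matching can actually be placed pairwise-disjointly (overlap permitted only along boundaries). Two consecutive matched pairs, say $\{v_{2i},v_{2i+1}\}$ and $\{v_{2i+2},v_{2i+3}\}$, involve the path-adjacent points $v_{2i+1}$ and $v_{2i+2}$, which lie close together, so I must use the exact grid coordinates of the clause gadget to argue that the two enclosing unit squares meet at most along a common boundary. This — together with confirming the "any three points exceed diameter one" spacing that underpins both the at-most-two-per-square bound and the impossibility of covering a non-link point in a size-$2M$ solution — is where the detailed layout of the gadget must be invoked; the matching and counting portions of the argument are routine by comparison.
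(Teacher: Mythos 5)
Your proof is correct, and its skeleton is the same as the paper's: sufficiency via explicit pairings of consecutive points (your three matchings are exactly the paper's figure covering together with its ``shift the left/right half to the middle'' variants), and necessity via the fact that no unit square can contain more than two gadget points. The one substantive difference is in the necessity count. You count all $4M$ points that must be covered, which immediately gives $2M$ squares; the paper instead counts only the $4M-2$ non-link points, noting that they form two runs of $2M-1$ points lying more than unit distance apart on either side of the middle link, so that no square serves both runs and each run alone needs $\lceil (2M-1)/2\rceil = M$ squares. The paper's variant is marginally stronger: it shows that the non-link points by themselves already force $2M$ squares, with no assumption about which link points are covered by clause-gadget squares. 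That stronger form is what the pigeonhole accounting in Lemma~\ref{lem_switch} and Lemma~\ref{theo:npsq} quietly relies on, since in a global covering a link point may be picked up by a connection square rather than by a clause square; your count, applied in that setting, would only certify $2M-1$ squares per clause gadget. For the lemma as stated, however, your argument is complete, and in fact simpler. Two minor points: the links are not ``exactly the vertices that a maximum matching of a path can leave uncovered'' --- any odd-indexed vertex (e.g.\ $v_3$, via $\{v_1,v_2\},\{v_4,v_5\},\dots$) can be left unmatched, and the links are merely among these --- but nothing in your proof depends on that remark; and the disjointness you flag as the main obstacle is immediate, because a unit square covering a matched pair spans exactly the unit interval between the pair's coordinates, and consecutive matched pairs are separated by a full unit, so the squares cannot overlap.
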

\begin{proof}
	Figure~\ref{fig_vargadget}, right, shows a covering of all points (except for the middle link point) with $2M$ squares. By shifting the $M$ rightmost (or leftmost) squares to the center, we can cover the middle link, but at the same time we uncover the right (or left) link point; therefore the upper bound holds.

	Consider any covering of all non-link points, which forms two sequences of equal length to the left and right of the middle link point, that are more than unit distance apart. We need at least $\left\lceil  (2M-1)/2 \right\rceil=M$ squares to cover each point sequence, thus the lower bound also holds.
\end{proof}

\begin{figure}
\centering
\includegraphics[width=0.8\textwidth]{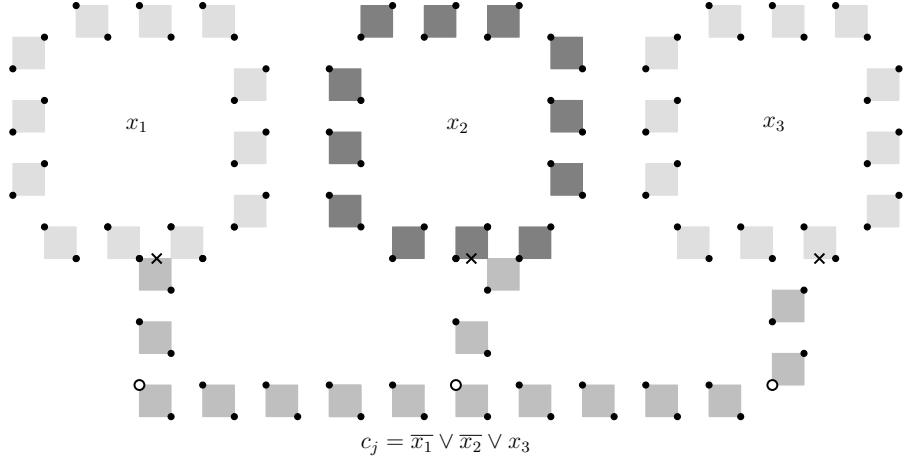}
\caption{Connection between a clause gadget and its corresponding variable gadgets (switches depicted as crosses and links as hollow circles).
In the clause $c_j$, $x_1$ and $x_2$ are negated --- their switch lies in the \false region, whereas $x_3$ is non negated in $c_j$ --- the switch lies in the \true region. The assignment of $x_1, x_3$ -- \true (light grey) and $x_2$ -- \false (dark grey), which satisfies the clause $c_j$, leads to a covering of all connecting points and the clause gadget. }
\label{fig_connect}
\end{figure}

	We connect each clause gadget with its three corresponding variable gadgets as follows (see Figure~\ref{fig_connect}): from each link point of a clause $c_j$ we add a sequence of {\em connecting points} leading to one variable. Let $e_{1,j}$ ($e_{2,j},e_{3,j}$, resp.) be the total number of points added to connect clause gadget $c_j$ with the variable gadgets $x_{1}$ ($x_{2},x_{3}$, resp.). We set $e_{i,j}$ to be odd, which can always be done by making the underlying grid sufficiently fine.

	For each connection between clause gadget $c_j$ and the variable gadgets $x_{1}, x_{2}$ and $x_{3}$, we add three additional points called {\em switches} $s_{1,j}, s_{2,j}$ and $s_{3,j}$. We put the switches between two points of the outer boundary of the variable gadget, either in its
	FALSE or TRUE region, depending on whether the associated literal is negated or not. This way the switch is already covered by a square of the variable gadget if and only if the corresponding variable assignment makes the literal TRUE. We say that the switch is {\em on} if it is covered by a square of the variable gadget, and {\em off} otherwise. Figure~\ref{fig_connect} shows how to connect the clause gadget $c_j$ with the three variable gadgets when the specific assignment of truth values is \true for $x_1, x_3$ and \false for $x_2$.

\begin{lemma}\label{lem_switch}
	Any clause gadget $c_j$ and its connecting points can be covered with $2M+\sum_{i=1}^3\left\lceil e_{i,j}/2\right\rceil$ unit squares if and only if at least one switch is {\em on}.
\end{lemma}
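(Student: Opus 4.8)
The plan is to reduce the statement to a parity analysis of each of the three connections, coupled with the link-covering flexibility that the proof of Lemma~\ref{lem_clause} provides for the clause gadget. Fix a connection $i$ and regard it as a chain that starts at the corresponding link point of $c_j$, runs through the $e_{i,j}$ connecting points, and ends at the switch $s_{i,j}$; the grid is chosen fine enough that a unit square can contain at most two consecutive points of this chain, and so that no unit square can simultaneously contain a switch and a clause point, nor a connecting point together with a non-link clause point. The key observation is a parity fact: since $e_{i,j}$ is \emph{odd}, the set consisting of the link point together with the connecting points, as well as the set consisting of the connecting points together with $s_{i,j}$, each has $e_{i,j}+1$ (an even number of) points and is coverable by exactly $\lceil e_{i,j}/2\rceil$ squares, whereas the whole chain (link point, connecting points, and switch) has the odd number $e_{i,j}+2$ of points and therefore needs $\lceil e_{i,j}/2\rceil+1$ squares.

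For the \emph{if} direction, suppose at least one switch is on, so it is already covered by a square of its variable gadget. For every chain whose switch is on I would cover the chain from its link end, pairing the link point with the first connecting point and so on; by the parity fact this spends $\lceil e_{i,j}/2\rceil$ squares, covers the link point, and never reaches the (externally covered) switch. For every chain whose switch is off I would instead cover from the switch end, again spending $\lceil e_{i,j}/2\rceil$ squares to cover the switch and the connecting points, but now leaving the link point uncovered. In total the chains use $\sum_{i=1}^3\lceil e_{i,j}/2\rceil$ squares and cover exactly the link points whose switches are on. It remains to cover the clause gadget and the still-uncovered (off-switch) link points; since at least one switch is on, the off-switch links form a subset of size at most two. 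Using the freedom in the proof of Lemma~\ref{lem_clause} --- each side's extremal square may or may not be extended to grab its adjacent link --- I would cover the whole clause gadget together with precisely these off-switch links, using $2M$ squares. Because the clause squares cover exactly the off-switch links and the chains cover exactly the on-switch links, the two families meet at most along boundaries, and the total is $2M+\sum_{i=1}^3\lceil e_{i,j}/2\rceil$.

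For the \emph{only if} direction I would argue the contrapositive: if all three switches are off, the budget cannot be met. Each off switch must be covered here (its variable gadget does not cover it and, by the spacing, no clause square can reach it), so every chain must cover its switch together with its $e_{i,j}$ connecting points; this already forces $\geq\lceil e_{i,j}/2\rceil$ squares on that chain, with equality only when the chain avoids its link point, and $\geq\lceil e_{i,j}/2\rceil+1$ squares when the chain also covers its link. Let $a$ be the number of links covered by their own chains. Then the chains together use at least $\sum_{i=1}^3\lceil e_{i,j}/2\rceil+a$ squares, while the remaining $3-a$ links must be covered, along with the two length-$(2M-1)$ non-link sequences, by clause squares. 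The clause always needs at least $2M$ squares for the two sequences, and if $a=0$ it must cover all $4M+1$ clause points, which needs at least $2M+1$ by a pigeonhole count. Hence when $a=0$ the total is at least $2M+1+\sum_{i=1}^3\lceil e_{i,j}/2\rceil$, and when $a\geq 1$ it is at least $2M+\big(\sum_{i=1}^3\lceil e_{i,j}/2\rceil+1\big)$; in both cases it exceeds the budget, completing the contrapositive.

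I expect the main obstacle to be not the parity bookkeeping but guaranteeing that the clean combinatorial count faithfully mirrors the geometry. Specifically, I would need the construction to rule out a unit square that simultaneously serves two roles --- e.g.\ one covering a switch and a neighbouring clause point, or one covering a non-link sequence point together with a connecting point by ``stepping over'' the link --- since such shortcuts could save a square and break either the lower bound of the \emph{only if} direction or the disjointness required in the \emph{if} direction. Securing these separations by making the grid sufficiently fine (so that the only pairs of points within unit $L_\infty$ distance are the intended consecutive ones), and checking that the on-link chain squares and the clause's extremal squares abut only at their boundaries, is the part that requires the most care.
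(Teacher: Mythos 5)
Your proof is correct and follows essentially the same route as the paper's: the forced covering of a chain from the switch end when the switch is off, the shifted covering that grabs the link when it is on, the parity consequence of $e_{i,j}$ being odd, and Lemma~\ref{lem_clause} to handle the clause gadget itself. Your write-up is in fact more explicit than the paper's (the case analysis on the number $a$ of links covered by their own chains, and the careful handling of multiple on-switches, are details the paper compresses into ``follows directly from Lemma~\ref{lem_clause}''), but the underlying decomposition and counting are identical.
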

\begin{proof}
	Consider the covering of the connecting points when the corresponding switch is {\em off}, i.e., it is not covered by a square of the associated variable gadget. In this case, the first square of the connection must cover both the switch and the first connecting point. The following squares cover the second and third connecting points, etc. Since the number of connecting points is odd, the last square covers the last two connecting points.

	If the switch $s_{i,j}$ is {\em on}, i.e., it lies in the covering of the variable, then the first square of the connection can be moved to cover the first and second connecting points, the second square covers the third and fourth connecting points, and the last square covers the last connecting point and the $i$th link point of the clause gadget $c_j$.

	Clearly $\sum_{i=1}^3\left\lceil e_{i,j}/2\right\rceil$ squares are necessary to cover all connecting points, thus the remainder of this lemma follows directly from Lemma \ref{lem_clause}.
\end{proof}

	Since $G(F)$ is planar, there exists an embedding of our construction so that no two connections overlap. Furthermore, since $N$ is large (in particular larger than the degree of $G(F)$), we can place switches far away from each other (i.e., more than two units away from each other) so that the associated coverings are independent. Using the lemma above we derive the following lemma:

\begin{lemma}\label{theo:npsq}
	A planar $3$-SAT formula is satisfiable if and only if the associated point covering problem instance can be covered with $2nN+2mM+E$ unit squares, where $E=\sum_{j=1}^m\sum_{i=1}^3\left\lceil e_{i,j}/2\right\rceil$.
\end{lemma}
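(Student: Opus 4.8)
The plan is to prove the biconditional by establishing a tight accounting of the squares needed, combining the per-gadget bounds already proved. First I would fix the target count $2nN+2mM+E$ and interpret each term: the $2nN$ comes from the $n$ variable gadgets (each needing exactly $2N$ squares by construction), the $2mM$ together with $E$ comes from the $m$ clause gadgets and their connecting points via Lemma~\ref{lem_switch}. The key structural fact, guaranteed by planarity of $G(F)$ and the choice of $N$ larger than the maximum degree, is that the gadgets and connections can be embedded without overlap and with switches placed more than two units apart, so that the covering decisions in one gadget do not interfere with those in another. This independence is what lets me add the per-gadget counts to get a global count.

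For the forward direction, suppose $F$ is satisfiable and fix a satisfying truth assignment. I would cover each variable gadget $x_i$ using the $2N$ squares corresponding to its assigned truth value (\true or \false region), which in particular turns \emph{on} every switch whose literal is satisfied. Since the assignment satisfies every clause, each clause $c_j$ has at least one satisfied literal, hence at least one switch $s_{i,j}$ is \emph{on}; by Lemma~\ref{lem_switch}, the gadget $c_j$ together with its connecting points can then be covered with $2M+\sum_{i=1}^3\lceil e_{i,j}/2\rceil$ squares. Summing over all variables and clauses, and using the independence from the embedding, the total is exactly $2nN + \sum_{j=1}^m\bigl(2M+\sum_{i=1}^3\lceil e_{i,j}/2\rceil\bigr) = 2nN+2mM+E$.

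For the reverse direction, I would argue by a counting/optimality squeeze. Each variable gadget requires at least $2N$ squares and each clause-plus-connection block requires at least $2M+\sum_{i=1}^3\lceil e_{i,j}/2\rceil$ squares (the $M+M$ lower bound from Lemma~\ref{lem_clause} plus the $\lceil e_{i,j}/2\rceil$ connecting-point lower bound from Lemma~\ref{lem_switch}). These minima sum to precisely $2nN+2mM+E$, so a covering achieving this total must be \emph{tight} on every gadget simultaneously. Tightness on a variable gadget forces it into one of the two canonical coverings, inducing a consistent truth assignment; tightness on each clause block forces, again by Lemma~\ref{lem_switch}, that at least one of its switches is \emph{on}, i.e.\ at least one literal is satisfied. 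Hence the induced assignment satisfies every clause and $F$ is satisfiable.

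The main obstacle I anticipate is making the independence claim fully rigorous: I must ensure that in any optimal (tight) global covering, no square can be shared between, or shifted across, two different gadgets or two different connections in a way that beats the sum of the local lower bounds. This is where the geometric separation matters — the grid must be fine enough that distinct connecting paths and distinct switches are more than one unit apart, so that no single unit square can simultaneously contribute to covering points belonging to two different gadgets. Once this separation is argued (using planarity of $G(F)$ for a crossing-free embedding and $N$ exceeding the maximum degree to spread out the switches), the global lower bound decomposes cleanly into the independent per-gadget lower bounds and the squeeze argument goes through.
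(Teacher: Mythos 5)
Your proposal is correct and follows essentially the same route as the paper's proof: the satisfiable-implies-coverable direction is the same constructive argument via Lemma~\ref{lem_switch}, and the converse is the same pigeonhole/tightness squeeze forcing each variable gadget to use exactly $2N$ squares (hence a canonical covering and a valid assignment) and each clause block to be tight (hence a switch \emph{on} per clause). Your explicit treatment of gadget independence is a more careful spelling-out of what the paper establishes in the text preceding the lemma (planar embedding, switches more than two units apart), not a different argument.
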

\begin{proof}
	($\Leftarrow$): Consider any covering of the points. Using Lemma~\ref{lem_switch} and the pigeon hole principle, $2nN$ unit squares are needed to cover all variable gadgets and at least $2mM+E$ unit squares are necessary to cover all clause gadgets (including the connecting points and switches). Thus, each variable must be covered with exactly $2N$ squares and each clause must use exactly $2M+\sum_{i=1}^3\left\lceil e_{i,j}/2\right\rceil$ squares.

	In particular, the covering for the variables is fixed; hence any covering gives a valid variable assignment. By Lemma~\ref{lem_switch} we get that at least one switch must be {\em on} for each clause. This corresponds to each clause $c_j$ being satisfied at least once; thus the 3-SAT instance as a whole is satisfied.

	($\Rightarrow$): Given a variable assignment, we generate the corresponding covering. By construction, each clause $c_j$ must have at least one switch {\em on}, therefore the gadget of $c_j$ (and its connecting points) can be covered using $2M+\sum_{i=1}^3\left\lceil e_{i,j}/2\right\rceil$ squares.
\end{proof}

	The following lemma on hardness of approximation follows from our construction above:

\begin{lemma}\label{cor:insat}
	If the $3$-SAT formula is not satisfiable, any covering with $2nN+2mM+E$ squares has at least one square with area at least $9/4$.
\end{lemma}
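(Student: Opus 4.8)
The plan is to argue by contradiction, reducing a hypothetical covering whose largest square has area below $9/4$ to the situation governed by Lemma~\ref{theo:npsq}. Suppose the formula is unsatisfiable, yet all points can be covered by $2nN+2mM+E$ squares, each of area strictly less than $9/4$, i.e. of side length strictly less than $3/2$. I want to show that such squares have exactly the same covering power on this instance as unit squares, so that the lower-bound machinery of Section~\ref{sec:hard} still applies and forces a satisfying assignment, contradicting unsatisfiability.

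The elementary geometric fact I would use is that an axis-aligned square of side $s$ contains a finite point set if and only if every pair of points of the set is at $L_\infty$-distance at most $s$: the set fits exactly when both its $x$-extent and its $y$-extent are at most $s$, and these extents are realized by pairs. Consequently, a square of side $<3/2$ can cover only point subsets all of whose pairwise $L_\infty$-distances are $<3/2$.

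The heart of the proof, and the main obstacle, is a quantitative separation property of the construction: I would verify that any two points of the instance that cannot be covered by a single unit square are in fact at $L_\infty$-distance at least $3/2$; that is, no pair of points lies at distance in the open interval $(1,3/2)$. Points meant to be co-covered (consecutive points of a variable ring, consecutive connecting points, a switch and its adjacent point, and so on) are placed at distance at most one unit, while every remaining pair is separated by at least $3/2$; in particular the two sequences flanking a middle link point of a clause gadget, which Lemma~\ref{lem_clause} already treats as being more than a unit apart, I would fix to be at distance at least $3/2$. This is exactly where the value $9/4=(3/2)^2$ enters, and it is tight: a square of side exactly $3/2$ could bridge such a gap and cheat, which is why no smaller area suffices. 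Checking this separation simultaneously for the rings, the clause lines, the switch placements, and all routed connections, in every orientation, is the delicate part.

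Granting the separation property, the contradiction follows by re-running the lower-bound arguments. Since a square of side $<3/2$ can contain no pair at distance $\geq 3/2$, it covers precisely the point subsets a unit square can: it still covers at most two consecutive points of any ring, clause line, or connection (three consecutive such points span distance $2>3/2$), and it still cannot bridge the gap at a middle link or between independent gadgets. Hence Lemmas~\ref{lem_clause}, \ref{lem_switch}, and \ref{theo:npsq} hold verbatim with the phrase ``unit square'' replaced by ``square of side less than $3/2$'', so a covering of all points by $2nN+2mM+E$ such squares would force at least one switch to be on in every clause and thus a satisfying assignment, contradicting our assumption. Therefore any covering with $2nN+2mM+E$ squares must use at least one square of side at least $3/2$, i.e. of area at least $9/4$.
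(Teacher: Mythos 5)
Your overall strategy matches the paper's: assume the formula is unsatisfiable yet a covering by $2nN+2mM+E$ squares of side strictly less than $3/2$ exists, show that such squares can cover no subset of the construction's points that a unit square cannot, and then invoke Lemma~\ref{theo:npsq} to extract a satisfying assignment, a contradiction. Your reduction of covering power to pairwise $L_\infty$-distances (a point set fits in an axis-aligned square of side $s$ iff its $x$- and $y$-extents are at most $s$) is also sound.

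The gap is at what you yourself call the heart of the proof: the separation property that no two points of the construction lie at $L_\infty$-distance strictly between $1$ and $3/2$. You do not prove it; you announce that you \emph{would verify} it by a case analysis over rings, clause lines, switches, and routed connections in every orientation, and you even propose to \emph{modify} the construction (the gap flanking a middle link point) to make the property hold --- which suggests you are not sure it holds as built. A proof whose central quantitative claim is an unexecuted, admittedly ``delicate'' case analysis is incomplete. Moreover, the case analysis is unnecessary: the paper disposes of this in one line by noting that, by construction, every point has integer coordinates (half-integer for switches), i.e., all coordinates lie in $\tfrac{1}{2}\mathbb{Z}$. Hence every coordinate difference, and therefore every pairwise $L_\infty$-distance, is a multiple of $1/2$, so no distance can fall in the open interval $(1,3/2)$. (The paper phrases this as a shrinking argument --- shrink each square until opposite sides touch points, so its side length is a coordinate difference and is thus either at most $1$ or at least $3/2$ --- but the content is the same.) With that observation your argument closes, and your worry about the middle-link gap evaporates, since that gap is an integer at least $2$. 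So the skeleton is right, but as written the proposal asserts rather than proves the one fact on which everything rests.
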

\begin{proof}
	By construction, all points have integer coordinates (semi integer if the point is a switch). That is, all points can be written as $p=(u+k/2,v+k/2)$, where $u,v \in \mathbb{N}$ and $k \in \{0,1\}$. Assume that there exists a covering which has a largest square with area strictly smaller than $9/4$ (i.e., the largest square has side length smaller than $3/2$). Given any square covering of the construction, we shrink each square until it has two points on opposite sides of the boundary, without uncovering any points. By shrinking the squares, we set the side length of each square to the difference in either $x$- or $y$-coordinates of some two points of the construction. Since by Lemma~\ref{theo:npsq} it is not possible to find a covering with unit squares, the next possible side length is $3/2$.

\end{proof}

	We conclude this section with the following theorem:

\begin{theorem}
	Given $n$ points in the plane, let $p \in \mathbb{N}$ be part of the input and let $k$ be any fixed integer with $n-k \in \Omega(n)$.
	Then, the $(p,k)$-\textsc{Square Covering} problem is NP-hard. Moreover, it is NP-hard to find an approximate solution within ratio $2.25$.
\end{theorem}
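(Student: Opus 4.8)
The plan is to assemble the final theorem directly from the lemmas already established, treating the $k>0$ case and the approximation hardness as immediate corollaries of the $k=0$ construction. First I would handle the NP-hardness of the decision problem for $k=0$: by Lemma~\ref{theo:npsq}, the planar 3-SAT formula $F$ is satisfiable if and only if the constructed point set admits a covering by exactly $2nN+2mM+E$ unit squares. Since the reduction is clearly polynomial in the size of $F$ (the grid can be refined to make each $e_{i,j}$ odd while keeping the coordinates polynomially bounded, and $N,M$ are constants), this gives a polynomial-time many-one reduction from planar 3-SAT, which is NP-hard by~\cite{l-pftu-82}. Hence the $(p,0)$-\textsc{Square Covering} decision problem is NP-hard.

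Next I would extend to arbitrary fixed $k$ with $n-k \in \Omega(n)$. As already noted in the opening of Section~\ref{sec:nphard}, we augment the instance by placing $k$ extra points sufficiently far from the whole construction (say, spread out so that no small square could ever cover two of them, and so far that covering any of them with a bounded-area square is impossible). These $k$ points then become exactly the outliers: any optimal $(p,k)$-covering ignores them and must cover the original construction with the remaining squares. Thus solving the $(p,k)$ instance is equivalent to solving the $(p,0)$ instance on the original points, preserving the reduction. One should check that $n-k \in \Omega(n)$ guarantees the blow-up in instance size is only a constant factor, so the reduction stays polynomial; this is the reason for the hypothesis.

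For the inapproximability claim I would invoke Lemma~\ref{cor:insat}: if $F$ is \emph{not} satisfiable, then every covering using the target number $2nN+2mM+E$ of squares contains a square of area at least $9/4$, whereas if $F$ \emph{is} satisfiable the optimal covering uses only unit squares (area $1$). Therefore a polynomial-time algorithm approximating the minimum largest-area within any factor strictly less than $9/4 = 2.25$ could distinguish the two cases and hence decide planar 3-SAT. This yields NP-hardness of approximation within ratio $2.25$. I would phrase this as a gap argument: the optimal objective is $1$ in the yes-case and at least $9/4$ in the no-case, so the gap is exactly $2.25$.

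The main obstacle, which is really a matter of careful bookkeeping rather than deep difficulty, is justifying that the two directions of Lemma~\ref{theo:npsq} genuinely force the covering to respect the gadget structure — in particular that the pigeon-hole counting ($2nN$ for variables, $2mM+E$ for clauses) leaves no slack, so that an adversarial covering cannot "steal" squares from one gadget to help another. This independence is exactly what the requirement that $N$ exceed the degree of $G(F)$ and that switches lie more than two units apart secures, and I would make sure the final write-up points back to those conditions when asserting that the counts are tight. Everything else follows formally from the cited lemmas.
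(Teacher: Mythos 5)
Your proposal is correct and follows essentially the same route as the paper, which states this theorem as a direct consequence of Lemma~\ref{theo:npsq} (the reduction from planar 3-SAT), the remark at the start of Section~\ref{sec:nphard} about placing $k$ extra points far away to handle outliers, and Lemma~\ref{cor:insat} for the gap of $9/4$ yielding the $2.25$ inapproximability ratio. Your additional observations---that $n-k\in\Omega(n)$ keeps the reduction polynomial, and that the gap argument strictly speaking rules out ratios below $9/4$---are exactly the right bookkeeping points.
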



\subsection{Covering Points with Rectangles}
	In this section we show NP-hardness for the \rec\ problem. Note that by making an affine transformation of the previous reduction for squares, we can easily obtain hardness for coverings with rectangles of any fixed ratio. However, the reduction does not work for arbitrary rectangles, since in this case we can cover each variable gadget with eight horizontal and vertical segments of zero area (i.e.,  arbitrarily thin rectangles). By doing so, all switches will be {\em on}, regardless of the variable assignment, and the reduction fails. Hence, we need a different reduction for the \rec\ problem. Again, we reduce from planar 3-SAT, and focus on the decision version of the problem for $k=0$. We call an axis--aligned rectangle a {\em unit rectangle} if its area is at most one, and $p$ unit rectangles form a {\em unit covering} if they together cover all points.

\subsubsection{Staircase sequences}
	For our reduction, we need the notion of {\em staircase} sequences:

\begin{defi}
	A sequence $S=(p_1, \ldots p_{2N})$ of $2N$ points in the plane is a {\em staircase sequence} if and only if it satisfies the following properties:
\begin{itemize}
	\item For any integer $0 \leq i < N$, two consecutive points $p_{2i}$ and $p_{2i+1}$ of the sequence have the same $x$-coordinate and two consecutive points $p_{2i-1}$ and $p_{2i}$ have the same $y$-coordinate (we assume the sequence is closed and set $p_{2N} = p_0$).
	\item No unit rectangle covers any two non-consecutive points of $S$.
\end{itemize}
	We call $n$ staircase sequences $S_1, \ldots S_n$ {\em mutually independent} if no unit rectangle contains points of more than one sequence.
\end{defi}

	We will consider a covering of points that can be decomposed into mutually independent staircase sequences. By definition, no unit rectangle can include points of two independent sequences, thus the coverings of each sequence can be considered independently.

	Consider any unit covering of a single staircase sequence of $2N$ points with $N$ rectangles. If we cover successive points by horizontal or vertical segments, we obtain a covering with largest area zero. We call the covering of a staircase sequence \emph{vertical}, if the sequence is covered by $N$ rectangles such that each rectangle contains two points with the same $x$-coordinate. Similarly, we call the covering of a staircase sequence \emph{horizontal}, if the points inside one rectangle have the same $y$-coordinate, see Figure~\ref{fig:staircase}.

\begin{figure}
\centering
\includegraphics[width = 0.35\textwidth]{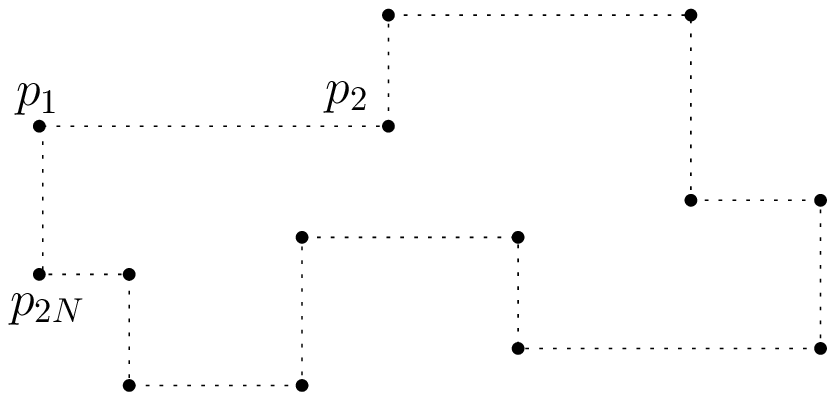}
\caption{Staircase sequence of $2N$ points. Selecting either the horizontal or vertical segments are the only ways of covering the sequence with $N$ unit rectangles.}
\label{fig:staircase}
\end{figure}

\begin{lemma}
	Any unit covering of a staircase sequence of $2N$ points with $N$ rectangles must either be a vertical or a horizontal covering.
\end{lemma}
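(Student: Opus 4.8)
The plan is to reduce this geometric statement to a purely combinatorial fact about perfect matchings of an even cycle. First I would exploit the second defining property of a staircase sequence: no unit rectangle contains two non-consecutive points. Since any three cyclically indexed points $p_{i-1},p_i,p_{i+1}$ contain the non-consecutive pair $\{p_{i-1},p_{i+1}\}$, this immediately rules out any unit rectangle containing three or more points. Hence each of the $N$ rectangles covers at most two points of $S$, and whenever it covers exactly two, they must be consecutive in the cyclic sequence.

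Next I would pin down the structure by a counting argument. Let $r_j \le 2$ be the number of points of $S$ contained in the $j$-th rectangle. Since the rectangles cover all $2N$ points, the number of distinct covered points is $2N$, so $\sum_{j=1}^N r_j \ge 2N$; on the other hand $\sum_{j=1}^N r_j \le 2N$ because each $r_j \le 2$. Therefore $\sum_{j=1}^N r_j = 2N$, which forces $r_j = 2$ for every $j$ and simultaneously forces the covered point-sets to be pairwise disjoint, so no point is covered twice. Thus the covering induces a partition of the $2N$ points into $N$ pairs of cyclically consecutive points.

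Identifying each such pair with the edge of the cycle $C_{2N}$ joining its endpoints, this partition is exactly a perfect matching of $C_{2N}$. The crux is then the elementary observation that an even cycle has precisely two perfect matchings, obtained by selecting alternate edges. Along the cycle the edges alternate between the vertical edges $p_{2i}p_{2i+1}$ (sharing an $x$-coordinate) and the horizontal edges $p_{2i-1}p_{2i}$ (sharing a $y$-coordinate), so the two alternating matchings are exactly the set of all vertical edges and the set of all horizontal edges. Consequently every unit covering with $N$ rectangles is either the vertical covering or the horizontal covering, as claimed.

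I expect the only real subtlety to lie in the counting step, which must rule out overlapping rectangles and degenerate rectangles covering a single point in one stroke: it is precisely the tightness $\sum_j r_j = 2N$ that converts the covering into an honest perfect matching. Once that is in place, the two-matchings fact finishes the argument with no geometry remaining.
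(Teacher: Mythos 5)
Your proof is correct, and its skeleton matches the paper's: both first observe that the no-non-consecutive-pair property of a staircase sequence rules out any unit rectangle containing three points, and both conclude that each rectangle must contain exactly two cyclically consecutive points. Where you genuinely diverge is the final step. The paper finishes in one line by appealing to the geometric disjointness of the rectangles (``since the rectangles must be disjoint, either all rectangles cover two points with the same $x$-coordinate or all cover points with the same $y$-coordinate''), whereas you make the counting explicit ($\sum_j r_j = 2N$ forces a partition into consecutive pairs) and then invoke the purely combinatorial fact that the even cycle $C_{2N}$ has exactly two perfect matchings, namely the two alternating edge classes, which are precisely the vertical and horizontal pairings. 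Your route buys two things: it is more self-contained and rigorous at the point where the paper is tersest, and it shows that the rectangles' pairwise disjointness is not actually needed for this lemma --- the conclusion already follows from coverage, the unit-rectangle constraint, and the staircase property alone. The paper's version is shorter but leaves the forcing argument (why one vertical pair propagates around the whole cycle) implicit; yours names the exact combinatorial mechanism.
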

\begin{proof}
	By the definition of staircase sequence no unit rectangle can cover three points. Therefore, each covering rectangle must contain exactly two consecutive points. Since the rectangles must be disjoint, either all rectangles cover two points with the same $x$-coordinate or all rectangles cover points with the same $y$-coordinate.
\end{proof}

	$N$ unit rectangles are both necessary and sufficient to cover a staircase sequence of $2N$ points, therefore we have:

\begin{cor}\label{cor_assig}
	Any unit covering of $n$ mutually independent staircase sequences, each with $2N$ points, that uses $n N$ rectangles must have either a vertical or a horizontal covering for each sequence.
\end{cor}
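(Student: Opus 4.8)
The plan is to combine the preceding Lemma (every unit covering of a single staircase sequence of $2N$ points by $N$ rectangles is either vertical or horizontal) with a counting argument driven by mutual independence. First I would use mutual independence to assign each covering rectangle to the sequence whose points it covers: since no unit rectangle contains points of more than one sequence, every rectangle covers points from at most one $S_i$, so the $nN$ rectangles partition, after discarding any that cover no points at all, into $n$ groups, one per sequence. Let $r_i$ denote the number of rectangles covering at least one point of $S_i$.

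Next I would invoke the per-sequence lower bound $r_i \ge N$, which is exactly the necessity half of the statement ``$N$ unit rectangles are both necessary and sufficient'' recorded just before the corollary. Concretely, no unit rectangle can cover three points of a staircase sequence: any three consecutive points $p_{2i}, p_{2i+1}, p_{2i+2}$ contain the non-consecutive pair $p_{2i}, p_{2i+2}$, which by the definition of a staircase sequence cannot lie in a common unit rectangle; and non-consecutive points are forbidden outright. Hence each rectangle covers at most two of the $2N$ points of $S_i$, forcing $r_i \ge N$.

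A pigeonhole step then closes the argument. Summing over all sequences gives $\sum_{i=1}^n r_i \le nN$, since only $nN$ rectangles are available, while the lower bound gives $\sum_{i=1}^n r_i \ge nN$. Together these force $r_i = N$ for every $i$, and in particular show that no rectangle is wasted on zero points. Thus each sequence $S_i$ is covered by exactly $N$ of the rectangles, and applying the preceding Lemma to this sub-covering yields that the covering of $S_i$ is either vertical or horizontal, as claimed.

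The argument is essentially bookkeeping, so I do not expect a serious obstacle; the only point requiring care is the passage from the global covering to $n$ independent sub-coverings. One must verify that mutual independence really makes the rectangle-to-sequence assignment well defined and disjoint, and handle the harmless case of rectangles covering no points, so that the two counting bounds can be compared cleanly and the equality $r_i = N$ extracted for every $i$.
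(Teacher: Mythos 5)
Your proof is correct and follows essentially the same route the paper intends: mutual independence decouples the sequences, the necessity of $N$ rectangles per sequence (no unit rectangle covers three points) combined with the budget of $nN$ rectangles forces exactly $N$ rectangles per sequence, and the preceding lemma then applies to each sub-covering. The paper leaves this counting implicit ("$N$ unit rectangles are both necessary and sufficient \dots therefore we have"); your write-up just makes the same bookkeeping explicit.
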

\subsubsection{Reduction}
	We construct $n$ mutually independent staircase sequences of $2N$ points each, where $n$ is the number of variables in the associated 3-SAT instance. Any unit covering of the points with $n N$ rectangles gives a variable assignment as follows: variable $x_i$ is set to \true if the $i$th staircase sequence has a horizontal covering, and \false otherwise. Similar to the square case, we add one more point for each clause. This point can only be covered by a unit rectangle if the corresponding variable assignment satisfies the clause.

\begin{figure}
\centering
\includegraphics[width=0.8\textwidth]{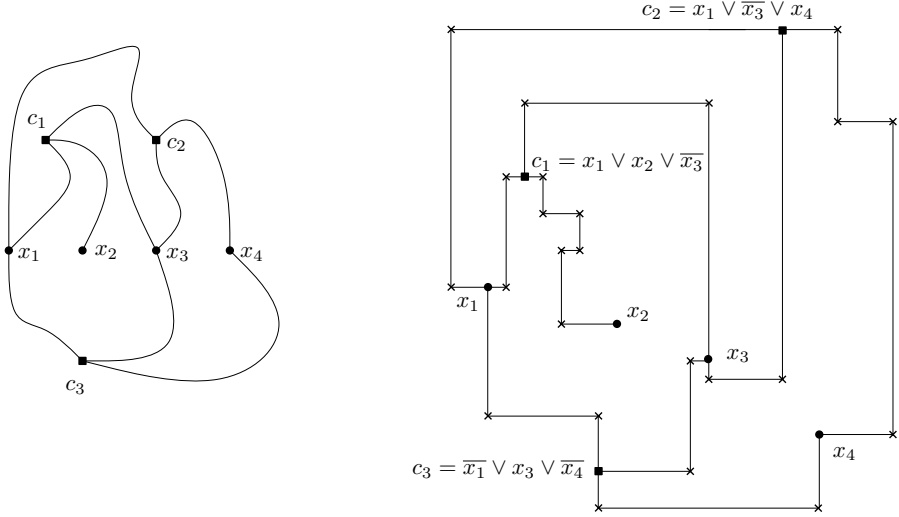}
\caption{As $G(F)$ is planar, we can transform any plane embedding of $G(F)$ into a rectilinear drawing such that each rectilinear tree has $N-2$ bends and non-adjacent bends do not have the same $x$- or $y$-coordinates.}
\label{fig_transformation}
\end{figure}

	Recall that $G(F)$ is planar, thus there exists a planar embedding of $G(F)$ such that all edges can be drawn as rectilinear arcs in the unit grid. For simplicity, we first consider the case in which there is at least one negated and one non-negated literal in each clause (we will show how to deal with the other types of clauses later). We call the union of all rectilinear arcs that connect some variable $x_i$ to the $1 \leq k \leq m$ clauses containing $x_i$ a \emph{rectilinear tree}. That is, we consider the variable node as the root, and the $k$ clause nodes as the leaves, and we choose an embedding for each tree such that the root and each internal node has degree exactly three and the whole tree has exactly $N - (k+1)$ bends. As $G(F)$ is planar, and we can choose $N$ sufficiently large, this is always possible. Consider now the rectilinear arc connecting variable $x_i$ with clause $c_j$. We modify the embedding such that the component of a tree incident to clause $c_j$ is vertical if the literal $\ell_i$ is negated in $c_j$, and horizontal otherwise, which is also always possible. We then further perturb the embedding such that no two non-successive bends of any rectilinear arcs have the same $x$- or $y$-coordinate. Finally, to avoid overlap when thickening the trees (as explained in the next paragraph), we scale the embedding by a factor $2(n+1)$, see Figure~\ref{fig_transformation} for an illustration.

\begin{figure}
\centering
\includegraphics[width=0.6\textwidth]{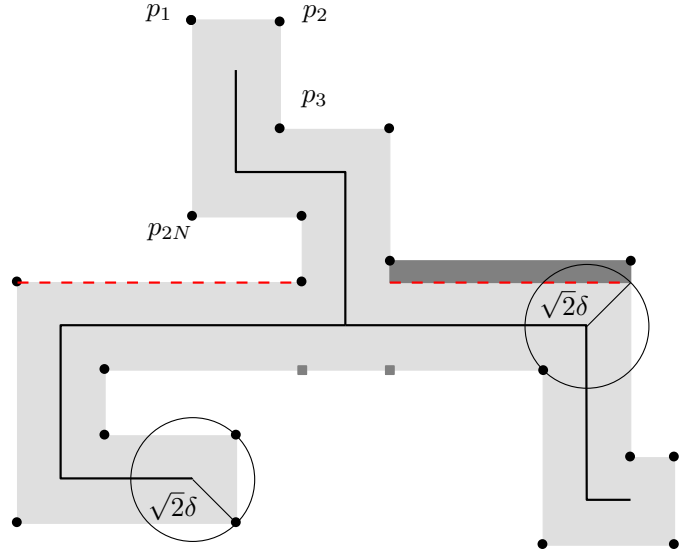}
\caption{Thickening of rectilinear trees results in a staircase sequence. For each endpoint or bend of the tree two new points at distance $\sqrt{2}\delta$ are generated. When an edge is split (dashed segments) we add unit squares until no non-adjacent edges of the sequence have the same $x$- or $y$-coordinate. We ignore the points that lie on the boundary of another thickened path (grey squares).}
\label{fig_thick}
\end{figure}

	We now replace each rectilinear tree containing $N-(k+1)$ bends and $k+1$ endpoints (one of them a variable, the $k$ others clause nodes) by a staircase sequence of $2N$ points as follows (see Figure~\ref{fig_thick}). We arbitrarily assign to each of the $n$ rectilinear trees in $G(F)$ a unique number $\delta \in \{1,\ldots,n\}$ and replace it by a path that is the Minkowski sum of the tree and a square of side length $2\delta$. Each rectilinear tree becomes a set of thickened paths that form a rectilinear polygon. Note that at any internal node (or the root), one of the vertical or horizontal components will split into two parts. When this happens, we add unit squares to the polygon until no non-consecutive edges of the polygon have the same $x$- or $y$-coordinate, without changing the number of polygon vertices which is always possible. Furthermore, two endpoints of one thickened path will lie on the boundary of one of the other thickened paths. These two points can be ignored. We then walk along the boundary of the generated polygon, and number the vertices in clockwise order; let $S_i=(p_1, \ldots, p_{2N})$ be the sequence of generated vertices.

\begin{lemma}
	The sequences $S_1, \ldots, S_n$ of vertices generated as above form $n$ mutually independent staircase sequences, each of them containing $2N$ points.
\end{lemma}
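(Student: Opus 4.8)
The statement bundles three claims that I would verify in turn: that each sequence $S_i$ has exactly $2N$ points, that each $S_i$ satisfies the two defining properties of a staircase sequence, and that the $n$ sequences are mutually independent. The plan is to read off the count and the two local properties from the combinatorics and geometry of the thickened rectilinear polygon, and to obtain independence from the scaling step alone.

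First I would establish the count of $2N$ vertices. The rectilinear tree assigned to a variable has $N-(k+1)$ bends and $k+1$ endpoints, for a total of $N$ features. Since a tree thickened by the Minkowski sum with a square is a simply connected rectilinear polygon, I would traverse its boundary once and charge vertices to features: the two corners of each leaf cap are charged to an endpoint, and the convex/reflex corner pair created at each degree-two turn is charged to a bend, giving two vertices per feature. It then remains to check the degree-three nodes: at each such split the construction discards the two endpoints of the inner thickened path that fall on the boundary of a neighboring path, and inserts the auxiliary unit squares. I would verify that these two operations together create and destroy no net vertex, so that the running total stays at two per feature and equals $2N$.

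Next I would verify the two staircase properties. The first property, that consecutive points alternately share an $x$- and a $y$-coordinate, is immediate: the boundary of an axis-aligned rectilinear polygon consists of edges that alternate between horizontal and vertical, so, after choosing the starting vertex of the clockwise walk to match the index parity, the endpoints of each edge share the appropriate coordinate and these roles alternate around the ring. For the second property I would show that for any two \emph{non-consecutive} vertices $p_a,p_b$ the axis-aligned bounding box has area strictly larger than $1$; since a rectangle of area at most $1$ covering both points would have width at least $|p_{a,x}-p_{b,x}|$ and height at least $|p_{a,y}-p_{b,y}|$, this excludes any unit covering of the pair. Here I would use both calibrations of the embedding: the perturbation guarantees that non-consecutive vertices share neither an $x$- nor a $y$-coordinate, so both coordinate differences are nonzero, and the scaling by $2(n+1)$ together with the thickening offset $2\delta\ge 2$ forces each such nonzero difference to be at least the grid spacing, which already exceeds $1$, so the product of the two differences exceeds $1$. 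The pairs needing separate attention are the convex/reflex corner pair coming from one bend and the corner pairs of neighboring caps or splits; for these the two corners differ by $2\delta\ge 2$ in \emph{both} coordinates, so their bounding box has area at least $4$.

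Finally, for mutual independence I would argue that two vertices belonging to different trees cannot lie in a common unit rectangle. Distinct trees receive distinct thickening offsets $2\delta$, and the planar embedding is scaled by the factor $2(n+1)$ precisely so that, after thickening, the images of any two trees remain separated by more than one unit everywhere; hence any axis-aligned rectangle meeting two sequences already has width or height greater than $1$. The main obstacle I anticipate is the second staircase property, since it is where the reduction actually uses the perturbation and the scaling and requires a short but careful case analysis over which vertex pairs can come close, in particular the corner pairs created at bends and at the split nodes, to confirm that in every case the enclosing box has area above the unit threshold. Once that geometric separation is pinned down, both the first property and independence follow from the same rectilinear structure with little additional work.
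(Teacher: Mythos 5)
Your three-part decomposition is sensible, and the counting and alternation arguments are unobjectionable (the paper does not even spell these out), but the quantitative core of your proof breaks in two places. First, the blanket claim that for non-consecutive vertices ``each nonzero difference is at least the grid spacing'' is false. Consider two vertices lying on the \emph{same} side of a long thickened segment, at two different nodes of the tree: their coordinate difference perpendicular to the segment is $0$ or $1$ (it is exactly the unit-square adjustments that make it $1$ rather than $0$), far below the grid spacing $2(n+1)$; similarly, vertices on \emph{opposite} sides of a segment at different nodes differ by only about $2\delta\le 2n$ in that coordinate. Neither kind of pair is in your list of special cases. For such pairs the desired bound $|\Delta x|\cdot|\Delta y|>1$ does hold, but only because \emph{all coordinates are integers}, so the small nonzero difference is still at least $1$ while the difference along the segment exceeds $1$; your argument never invokes integrality, and in fact you present the perturbation as merely making differences ``nonzero,'' which under an infinitesimal perturbation would let the product drop below $1$. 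The paper's proof is precisely this integrality computation: it writes every vertex explicitly as $\bigl((2n+2)X\pm\delta\pm k,\ (2n+2)Y\pm\delta\pm k\bigr)$ with $X,Y$ integers, $k\in\{0,1\}$, $\delta\le n$ and $|2\delta+k|<2n+2$, and concludes that two vertices fit in a unit rectangle if and only if they share a coordinate, which the construction permits only for consecutive vertices.

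Second, your independence argument is fallacious as stated: from ``the two trees are separated by more than one unit everywhere'' you conclude that a rectangle meeting two sequences ``has width or height greater than $1$'' --- but width \emph{or} height greater than $1$ does not bound the area, since a unit rectangle may be arbitrarily long and thin. What must be excluded is that two vertices of \emph{different} trees share an $x$- or $y$-coordinate (a degenerate rectangle of area $0$ would then cover both), and more generally that the product of their coordinate differences is at most $1$; metric separation of the thickened paths is neither necessary nor sufficient for this. This exclusion is exactly what the distinct thickening offsets $\delta\in\{1,\dots,n\}$ and the explicit coordinate form above are for --- you mention the distinct offsets but never use them. So both the second staircase property and mutual independence need the paper's arithmetic argument (or an equivalent one guaranteeing that nonzero differences are bounded away from $0$ and that coordinate collisions occur only between consecutive vertices of one sequence); as written, your proof would not survive scrutiny on either point.
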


\begin{proof}
	With the above transformation, we get the following new coordinates for the vertices of a tree. Let $P=(X,Y)$ be a node of the tree before both the scaling and the thickening, with integer coordinates. After the scaling with factor $2(n+1)$ it has the coordinates $P'=((2n+2)X, (2n+2)Y)$. After the thickening with factor $\delta$, the node transforms into a pair of vertices $p_{1,2}$, that lie on a circle $C$ with radius $\sqrt{2}\delta$ centered at $P'$. Depending on whether $P$ is an endpoint (i.e., the root or a leaf) or a bend of the original tree, these two vertices either lie on a quadrant or on a diameter of $C$. As all the numbers involved are integer, we get for each node $P$ of the tree a vertex pair with coordinates $p_{1,2} =((2n+2)X\pm \delta \pm k ,(2n+2)Y\pm \delta \pm k)$. Here, $X$ and $Y$ are integers, $\delta \leq n$ is the thickening factor, $k\in \{0,1\}$ is a factor describing the possible addition of unit squares to avoid having the same coordinates in non-adjacent edges, and $|2\delta+k| < 2n+2$. Therefore, two points can be covered by a unit rectangle if and only if they share one coordinate. This can only happen when both points are adjacent on the generated staircase sequence.
\end{proof}

\begin{figure}
\centering
\includegraphics[width=0.75\textwidth]{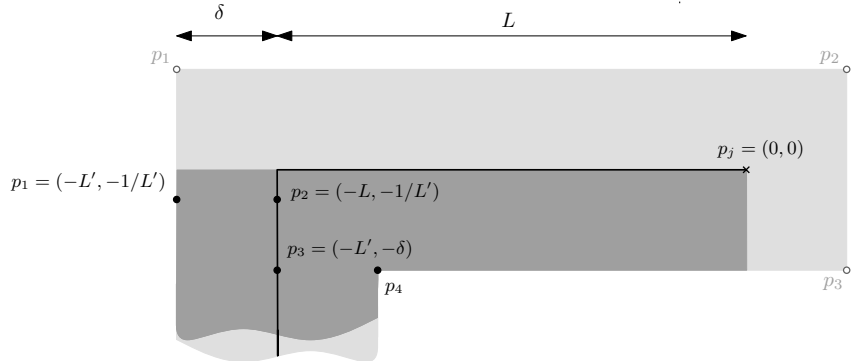}
\caption{Local transformation around clause $c_j$ (corresponding to point $p_j$). Staircase before (light grey) and after (dark grey) moving points to avoid intersection with other staircase sequences.}
\label{fig_trans}
\end{figure}

	By construction, the generated staircase sequences do not intersect, except at the clause variables. To remove these intersections, we modify the sequences locally around each clause node. Consider only a small neighborhood of clause $c_j$, and assume that we have a segment of length $L$ connecting to $c_j$ from the left (see Figure~\ref{fig_trans}). We add a point $p_j$ at the position of node $c_j$ to the staircase sequence.

	Assuming that $p_j=(0,0)$, we define $L'=L+\delta$ (where $\delta$ is the thickness of the path) and move the three points located at $(-L',\delta)$, $(\delta,\delta)$ and $(\delta,-\delta)$ to the new coordinates $p_1=(-L',-1/L')$, $p_2=(-L,-1/L')$ and  $p_3=(-L,-\delta)$. When connecting from below, right, or above, we use appropriately rotated versions of the transformation described above.

	Points $p_1$ and $p_2$ are called the {\em links} between clause $c_j$ and variable $x_i$. The main property of the construction is that we can cover both link points and the point $p_j$ with a single rectangle of area one. It is easy to see that the new coordinates of the three moved points are rational and that the staircase sequences remain mutually independent.

\begin{figure}
\centering
\includegraphics[]{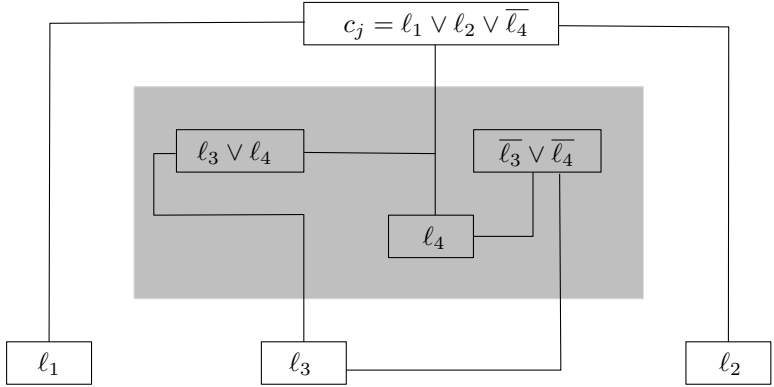}
\caption{Local transformation for clause $c_j= \ell_{1} \vee \ell_{2} \vee \ell_{3}$: using a negation gadget (inside the grey box) we can negate a literal in $c_j$.}
\label{fig_planar}
\end{figure}

	Finally, we need to show how to deal with clauses with all three literals either negated or not. This is important, as we cannot have three horizontal or vertical connections to the same clause node. Let $c_j= \ell_{1} \vee \ell_{2} \vee \ell_{3}$ be such a clause, then we can transform it into the following three clauses: $(\ell_{1} \vee \ell_{2} \vee \overline{\ell_{4}}) \wedge (\ell_{3} \vee \ell_{4})\wedge (\overline{\ell_{3}} \vee \overline{\ell_{4}})$. Here, $\ell_{4}$ is a literal of a new variable, and the two last clauses assure that $\ell_4$ has the opposite truth assignment of $\ell_3$.

	For each such clause, we additionally generate only one variable and two clauses, thus the asymptotical size of the transformation as well as its planarity are not affected (see Figure~\ref{fig_planar}). This transformation needs only constant space, hence can be done independently for each literal. After transforming all such clauses we can proceed as before.

	Let $P$ be the set of $2nN+m$ points of the $n$ staircase sequences generated by the transformation of a 3-SAT formula with $n$ variables and $m$ clauses. We have arrived at the following lemma.
\begin{lemma}
	A planar 3-SAT formula in $n$ variables is satisfiable if and only if the set $P$ of $2nN + m$ points generated as above can be covered with $nN$ unit rectangles.
\end{lemma}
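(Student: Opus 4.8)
The plan is to prove both implications of the equivalence, using Corollary~\ref{cor_assig} together with the link property of the clause gadget, and the key step will be a local geometric analysis at each clause node. I would begin with the counting argument that forces the clause points to be \emph{absorbed} into the staircase covering: since the $n$ sequences are mutually independent and no unit rectangle covers three points of a staircase, covering the $2nN$ staircase points needs at least $N$ rectangles per sequence, hence exactly $nN$ in total. Therefore, in any covering of $P$ by $nN$ unit rectangles, every rectangle covers exactly two consecutive points of a single sequence, each $S_i$ is covered horizontally or vertically (Corollary~\ref{cor_assig}), and no rectangle is free to cover the $m$ clause points $p_j$ on its own. Each $p_j$ must then lie inside one of the staircase rectangles, so the whole proof reduces to deciding when a rectangle covering a consecutive pair of staircase points can also contain $p_j$ with area at most one. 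As in the construction, I read off the assignment by setting $x_i=\true$ iff $S_i$ is covered horizontally.

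For the direction ($\Leftarrow$), I would take a covering by $nN$ unit rectangles, extract this assignment, and show every clause is satisfied. Fixing a clause $c_j$, its point $p_j$ lies in a rectangle $R$ that also covers a consecutive pair of some incident sequence $S_i$. Using the explicit coordinates $p_1=(-L',-1/L')$, $p_2=(-L,-1/L')$, $p_3=(-L,-\delta)$ and $p_j=(0,0)$ (and their rotated copies for the other connection directions), the enclosing box of $\{p_1,p_2,p_j\}$ has area exactly $L'\cdot(1/L')=1$, whereas the box of any other consecutive pair together with $p_j$ is far larger; for instance $\{p_2,p_3,p_j\}$ spans width $L$ and height $\delta$, of area $L\delta\gg 1$. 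Hence $R$ must cover the \emph{link pair} $(p_1,p_2)$, which shares a coordinate and is therefore grouped precisely in the covering orientation that, by the embedding rule (horizontal arc for a non-negated literal and a rotated gadget for a negated one), corresponds to the literal $\ell_i$ being true. Thus every clause has a satisfied literal and $F$ is satisfiable.

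For the direction ($\Rightarrow$), I would start from a satisfying assignment and build the covering. Cover each $S_i$ horizontally if $x_i=\true$ and vertically otherwise; by the single-sequence characterization this uses exactly $N$ pairwise-disjoint unit rectangles per sequence and covers all $2nN$ staircase points with $nN$ rectangles. For each clause $c_j$ I pick one satisfied literal; by the correspondence above its link pair near $c_j$ is grouped in the chosen orientation, so I stretch that (otherwise zero-area) rectangle to the box of $\{p_1,p_2,p_j\}$, which still has area one and now also covers $p_j$. The link pairs of the remaining literals stay covered as zero-area segments, so $p_j$ is covered exactly once. What remains is to check that the enlarged rectangles stay pairwise disjoint; this is exactly where the scaling of the embedding by $2(n+1)$ and the mutual independence of the sequences enter, since the three connections approach $p_j$ from different rotated directions and the enlargement stays within the slack created by the scaling (cf.\ Figures~\ref{fig_trans} and~\ref{fig_planar}).

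I expect the main obstacle to be the local geometric case analysis in the ($\Leftarrow$) direction: one must verify that, among all consecutive pairs of staircase points in a neighborhood of $c_j$ --- in either covering orientation, across all three incident sequences and their rotated gadgets --- the only pair whose enclosing rectangle can also contain $p_j$ within area one is the link pair in the satisfying orientation. This reduces to comparing a short finite list of bounding-box areas against the threshold $1$, using $L$ large, $1\le\delta\le n$, and $L'=L+\delta$ to force every other configuration to have area strictly greater than one. Once this local fact is secured, the counting, the extraction of the assignment, and the disjointness bookkeeping are all routine.
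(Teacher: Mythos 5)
Your proposal is correct and follows essentially the same route as the paper's own proof: both directions rest on the counting argument that forces exactly $m$ rectangles to absorb the clause points, Corollary~\ref{cor_assig} to extract a horizontal/vertical assignment, and the local geometry of the link points $p_1,p_2$ (bounding box of area exactly $L'\cdot(1/L')=1$) to tie satisfied literals to covering orientations. Your write-up is in fact somewhat more explicit than the paper's on the area comparisons at the clause node and on the disjointness check in the ($\Rightarrow$) direction, which the paper leaves implicit.
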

\begin{proof}
	($\Leftarrow$) Given a unit covering of $P$, we generate a variable assignment as follows: each variable is set to \true if its associated staircase sequence has a horizontal covering, \false otherwise. As any unit covering of $P$ is a unit covering of the $n$ mutually independent staircase sequences, this assignment is valid by Corollary~\ref{cor_assig}.

	We now show that this variable assignment satisfies all clauses; by construction, any rectangle that covers at least four points has area larger than one, thus no such rectangle can be in a unit covering. Since there are $2nN+m$ points in the construction and we want to cover them with $nN$ rectangles, there must be exactly $m$ rectangles, each covering three points. No three points from a variable gadget can be covered with a unit rectangle, thus each of the $m$ rectangles must cover two variable points and the point $p_j$ corresponding to clause $c_j$.

	By construction of the clause node $p_j$, such a covering is only possible if $p_j$ and any two links are covered by the same rectangle. Let $x_i$ be the variable with two links that are covered together with $p_j$ by one unit rectangle. If the literal of $x_i$ is not negated in $c_j$, the links share the $y$-coordinate. Since both links are covered by the same rectangle, the gadget of $x_i$ must be horizontally covered, which corresponds to setting variable $x_i$ to \true in our variable assignment. Since $x_i$ is set to \true and literal $\ell_i$ is not negated, clause $p_j$ is satisfied. The case with negated $\ell_i$ is analogous.

	($\Rightarrow$): Given a variable assignment, we generate a corresponding covering for the gadget variables. Each clause $c_j$ is satisfied at least once, thus we can cover point $p_j$ together with the link points of the variable that satisfies $c_j$ with one unit rectangle.
\end{proof}

	For the $(p,k)$-\textsc{Rectangle Covering} problem we can give the following inapproximability result:

\begin{lemma}
	If the 3-SAT formula is not satisfiable, any covering of the $n$ staircase sequences with $nN$ rectangles has at least one arbitrarily large rectangle.
\end{lemma}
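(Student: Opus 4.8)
The plan is to mirror the inapproximability argument used for squares (Lemma~\ref{cor:insat}), but to exploit the fact that an axis-aligned rectangle may be made arbitrarily long and thin: whereas for squares the admissible side lengths left a gap that forced the area up from $1$ to $9/4$, here the admissible bounding-box areas will leave a gap that forces the area up from $1$ all the way to a quantity that grows with the (freely chosen) scale of the embedding. As a preliminary normalization I would shrink every rectangle of a given covering to the bounding box of the points it contains; this only decreases areas and never uncovers a point, so it suffices to treat coverings in which each rectangle is the bounding box of its point set.

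First I would invoke the correspondence already in place. By Corollary~\ref{cor_assig} and the preceding lemma, the only way to cover all $2nN+m$ points with $nN$ rectangles of area at most one is the \emph{canonical} one---every staircase covered cleanly by $N$ consecutive (zero-area) pairs, and every clause point $p_j$ absorbed into an area-one triple together with one consecutive link-pair---and such a covering exists iff the formula is satisfiable. In a canonical covering the link-pair forming each triple is a \emph{selected} pair of its staircase matching, which fixes that staircase's orientation to the value satisfying the corresponding literal; hence a canonical covering directly exhibits a satisfying assignment. Consequently, if the formula is unsatisfiable, no covering by $nN$ rectangles can be canonical, and it remains to show that every \emph{deviation} from the canonical form forces a rectangle of large area.

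Next I would enumerate the deviations and bound their areas from below. A counting check---$nN$ rectangles, $2nN+m$ points, no unit rectangle covering three points of one staircase, gadgets separated by the scaling factor $2(n+1)$---shows that within the exact budget every staircase must still receive exactly $N$ rectangles and every $p_j$ must still be absorbed without an extra rectangle, so $p_j$ can be absorbed for free only by extending an already-selected consecutive link-pair into a triple. A deviation is therefore of one of two kinds. Either some staircase is covered non-cleanly, so it contains a rectangle spanning two non-consecutive points: such a rectangle straddles at least one inter-node edge of the thickened tree, whose length is the original edge length scaled by $2(n+1)$, while remaining non-degenerate in the perpendicular coordinate (there the relevant offsets differ by at least a fixed positive amount of the construction, namely the thickening $\delta\ge 1$, not by the vanishing $1/L'$), giving area on the order of a scaled edge length. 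Or all staircases are clean but some $p_j$ is absorbed by pulling a single link out of its selected pair; removing one vertex from the even cycle of $2N$ staircase points leaves an odd path with no perfect matching into consecutive pairs, forcing a non-consecutive repair pair elsewhere whose bounding box is large by the same estimate. Co-covering $p_j$ with two non-consecutive link points---the mismatched pair of one connection, or one point from each of two connections---is likewise a box of area at least a scaled edge length. Since the embedding may be scaled (equivalently $N$ and the edge lengths chosen) arbitrarily large before thickening, the forced rectangle is arbitrarily large, which proves the lemma.

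The main obstacle I anticipate is precisely this last combinatorial-geometric step: ruling out every budget-respecting way of covering the failing $p_j$ that might try to keep all boxes small. Because a rectangle can be arbitrarily thin, co-covering $p_j$ with a \emph{single} link from a mismatched connection costs almost no area (its bounding box is roughly $L/L'<1$), so the largeness cannot be charged to the box containing $p_j$ itself; it must be charged to the unavoidable non-consecutive repair pair created elsewhere in the disrupted staircase. Making this charging airtight requires a short case analysis over the local orientations of the three connections at the unsatisfied clause and over where the parity defect lands on the staircase cycle, together with the coordinate-gap estimate (offsets $1/L'$ versus $\delta$, edges scaled by $2(n+1)$) guaranteeing that every repair pair straddles a long edge. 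Once this gap is in hand the conclusion is immediate, and it shows in addition that the \rec{} problem is NP-hard to approximate within any factor.
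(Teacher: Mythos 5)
Your proof follows the same skeleton as the paper's: an exact-budget counting argument forces each staircase to be covered by a horizontal or vertical matching of consecutive pairs, this induces a truth assignment, and an unsatisfied clause leaves a point $p_j$ that cannot be absorbed without an expensive box; an arbitrary scale factor then makes ``expensive'' unbounded. Your explicit treatment of the thin box covering $p_j$ together with a \emph{single} link point (which the paper passes over silently) is correct in spirit: under the exact budget every rectangle must cover exactly two staircase points, so stealing one link creates a parity deficit that forces some box to cover three points of that staircase, hence a non-consecutive pair. The genuine gap lies in the quantitative core, i.e., in where ``arbitrarily large'' actually comes from.

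You scale the embedding \emph{before} thickening (equivalently, you let the tree edge lengths grow) and claim that any rectangle spanning two non-consecutive staircase points ``straddles at least one inter-node edge,'' hence has area on the order of a scaled edge length. That claim is false. A thickened tree contains non-consecutive vertex pairs that lie \emph{across} the path rather than along it: for instance the outer and inner corner at a bend, or a top-boundary and a bottom-boundary vertex at nearly the same position along an edge. Such a pair has bounding box of dimensions roughly $O(1)\times 2\delta$ with $\delta\le n$, because the thickening offsets $\pm\delta\pm k$ do not grow when you stretch the embedding before thickening; its area stays $O(n^2)$ no matter how large the scale. Consequently, once your target lower bound exceeds $O(n^2)$, the hypothesis ``all boxes are smaller than the target'' no longer forces horizontal/vertical coverings --- cheap cross-path chords are available --- so the forced-assignment step collapses and your argument yields at best a fixed lower bound depending on $n$, not an arbitrarily large one. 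The paper avoids exactly this pitfall by scaling the \emph{whole} construction uniformly by the arbitrary factor $M$, and only afterwards performing the local transformation at the clause points (so the intended satisfying covering is still a unit covering, while every non-consecutive pair --- along a path, across a path, or among the moved points $p_1,p_3$ --- has bounding-box area greater than $M^2$, since areas scale quadratically and the unscaled gap exceeds $1$ by the staircase property). With that single change your case analysis goes through and recovers the paper's estimate $M^2\delta(L+\delta)$ for the box forced at an unsatisfied clause; as written, however, your scaling does not support the conclusion.
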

\begin{proof}
	We scale the transformation by an arbitrarily large, constant factor $M$ before the local transformation in the neighborhood of the variables is done. If the 3-SAT formula is satisfiable, a unit covering is possible. However, consider any covering of a non-satisfiable 3-SAT instance: since the thick paths become arbitrarily thick, horizontal and vertical coverings are forced, and thus each covering still gives a valid variable assignment.

	We must enlarge the rectangles such that they cover all clause points $p_j$. Since the instance is non-satisfiable, for any variable assignment there exists a clause $c_j = \ell_1 \vee \ell_2 \vee \ell_3 $ with vertically covered variables if the literal is not negated, and horizontally covered variables otherwise. The minimum area rectangle that includes $p_j$ and two points sharing a $y$-coordinate (if the literal is not negated) includes the points $p_2$ and $p_3$, and it has area $M^2 L' \delta = M^2 \delta L + M^2 \delta^2$, which is arbitrarily large.
\end{proof}

\begin{theorem}
	Given $n$ points in the plane, let $p\in \mathbb{N}$ be part of the input and $k$ be any fixed integer with $n-k \in \Omega(n)$.
	Then, the $(p,k)$-\textsc{Rectangle Covering} problem is NP-hard.
	Moreover, the $(p,k)$-\textsc{Rectangle Covering} problem admits no constant-factor polynomial time approximation algorithm.

\end{theorem}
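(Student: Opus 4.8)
The plan is to assemble the theorem from the three preceding lemmas, handling the satisfiable and unsatisfiable cases just as in the square reduction of Section~\ref{sec:hard}. First I would settle the pure decision claim for $k=0$. The satisfiability lemma above exhibits, for a planar 3-SAT formula $F$ with $n$ variables and $m$ clauses, a point set $P$ of $2nN+m$ points such that $F$ is satisfiable if and only if $P$ admits a unit covering with $nN$ rectangles. Setting $p:=nN$ and observing that $N$ need only dominate the number of bends in the rectilinear trees, which is polynomial in $n+m$, this is a polynomial-time reduction; since planar 3-SAT is NP-hard~\cite{l-pftu-82}, the $(p,0)$-\textsc{Rectangle Covering} decision problem is NP-hard.

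Next I would lift this to an arbitrary but fixed $k\ge 0$ using the far-point trick announced at the start of Section~\ref{sec:nphard}. I augment $P$ with $k$ \emph{dummy} points placed at pairwise distances, and at distance from $P$, far larger than any quantity in the construction, and set the outlier budget to $k$; the instance then has $n'=2nN+m+k$ points, of which $n'-k=2nN+m=\Omega(n')$ must be covered by $nN$ rectangles. The key accounting is that a rectangle covering a dummy point can contain no other point of the instance, so it covers exactly one point, whereas in the staircase/clause structure any rectangle covers at most two staircase points plus possibly one clause point. Hence if $f$ rectangles are spent on dummy points, the number of covered points is at most $f+2(nN-f)+m=2nN+m-f$; reaching the required $2nN+m$ forces $f=0$. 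Thus every feasible covering must use all $nN$ rectangles on $P$ and take exactly the $k$ dummy points as outliers, reducing the $(p,k)$ instance to the $(p,0)$ instance on $P$, and NP-hardness for any fixed $k$ follows.

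For the inapproximability claim I would reuse the inapproximability lemma above together with the same accounting. Scale the whole transformation by a large constant $M$ before building the local clause gadgets, so that the staircase rectangles remain axis-parallel segments of area $0$ while every clause rectangle in a satisfying covering still has area $1$; thus a satisfiable formula has an optimum whose largest rectangle has area $1$. If $F$ is unsatisfiable, the accounting above still forces all of $P$ to be covered by the $nN$ rectangles, so the inapproximability lemma applies and some rectangle has area at least $M^2\delta^2$, which is arbitrarily large. Given a hypothetical polynomial-time $\rho$-approximation for a constant $\rho\ge 1$, choose $M$ with $M^2\delta^2>\rho$: on a satisfiable instance the algorithm returns a covering whose largest rectangle has area at most $\rho$, while on an unsatisfiable instance every covering, hence the one returned, has largest area exceeding $\rho$. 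Testing whether the returned largest area is at most $\rho$ therefore decides planar 3-SAT in polynomial time, so no such $\rho$ can exist unless $\mathrm{P}=\mathrm{NP}$.

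The step I expect to be the main obstacle is the second one: making the far-point reduction airtight in the \emph{optimization} setting. Because a rectangle around a single isolated point has zero area, I must rule out a covering that sacrifices a ``bad'' clause point as an outlier and instead covers a dummy point with a degenerate rectangle to stay within budget. The coverage-count inequality $f+2(nN-f)+m\le 2nN+m$ is exactly what forbids this regardless of areas, so the care lies in verifying that every rectangle really covers at most two staircase points (which follows from the staircase definition and Corollary~\ref{cor_assig}) and that the dummy points are isolated enough that no rectangle smaller than the forced large one can ever profitably contain one.
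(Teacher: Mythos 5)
Your overall assembly is exactly the paper's (implicit) proof: the paper states this theorem with no separate argument, deriving it from the two preceding lemmas --- the satisfiability lemma for NP-hardness at $k=0$ and the scaled-by-$M$ lemma for the gap --- together with the remark at the start of Section~\ref{sec:nphard} that $k$ far-away points extend any such reduction to fixed $k>0$. Your coverage-count argument and the explicit gap reduction (choose $M$ with $M^2\delta^2>\rho$, run the hypothetical $\rho$-approximation, and read off satisfiability from the returned largest area) are precisely the glue the paper leaves to the reader, and that part is sound.

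There is, however, one step that fails as literally written, and it is the very step you flagged as delicate: for \emph{rectangles}, placing the dummy points at huge \emph{Euclidean} distance does not imply your key claim that a rectangle covering a dummy point covers no other point. A unit rectangle may be arbitrarily long and thin, so a dummy point astronomically far away but sharing its $y$-coordinate with a horizontally covered staircase pair can be absorbed into that pair's zero-area covering rectangle. This actually breaks the reduction when $k\geq 1$ and the formula is unsatisfiable: cover all staircase pairs according to some assignment, leave the one uncoverable clause point as an outlier, and pick up one coordinate-aligned dummy for free with an elongated degenerate rectangle; this covers $2nN+(m-1)+1=2nN+m$ points with $nN$ unit rectangles, so the $(p,k)$ instance says YES while the formula is unsatisfiable. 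Your inequality $f+2(nN-f)+m\le 2nN+m$ does not forbid this, because its premise (dummy rectangles cover exactly one point) is exactly what is violated. The repair is local and easy: place the $k$ dummies so that every dummy differs from every other point of the instance, in \emph{both} coordinates, by more than any relevant threshold (e.g., on a distant diagonal); then any axis-aligned rectangle containing a dummy and a second point has area exceeding $1$ (respectively exceeding $\rho$, after scaling), and your accounting goes through. The same both-coordinates caveat is what justifies, in the unsatisfiable case, the dichotomy you implicitly use: either some rectangle already has area larger than $\rho<M^2$, or all rectangles are small enough that each covers at most two consecutive staircase points (Corollary~\ref{cor_assig}) and at most one clause point, forcing the structure to which the inapproximability lemma applies. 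With that placement made explicit, your proposal is correct and, if anything, more rigorous than the paper's own treatment.
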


\section{Exact Algorithms for $p\leq 3$}\label{sec:smallp}\label{sec:algs}


	In this section, we present algorithms to efficiently compute the solution for the \textsc{$(p,k)$-Box Covering} problem for small values of $p$. For simplicity, we assume throughout the following sections that no two points have the same $x$- or $y$-coordinate, and we assume furthermore in the description of our algorithms that we want to cover \emph{exactly} $n-k$ points. An adaptation to cover \emph{at least} $n-k$ points is straightforward.  Note that for $p\in \{2,3\}$, we can always find an axis parallel line that separates one box from the others. We exploit this property for a divide-and-conquer type of approach.

\subsection{Covering Points with Squares}\label{sec:squarecov}
	We first want to cover $n-k$ points of $P$ with $p$ squares. With a simple observation, we can improve an existing algorithm for computing the optimal solution of the \textsc{$(1,k)$-Square Covering} problem, which will function as our base case. Using certain monotonicity properties, we can apply binary search.
\subsubsection{$(1,k)$\textsc{-Square Covering}}\label{sec:squares}

	Previously, an $O(n\log n)$ expected time algorithm for the $(1,k)$-\textsc{Square Covering} problem was presented by Chan~\cite{c-garot-99}. We make use of Chan's algorithm as a subroutine of our algorithms.

	A point $p \in P$ is called {\em $(k+1)$-extreme} if either its $x$- or $y$-coordinate is among the $k+1$ smallest or largest in $P$. Let $E(P)$ be the set of all $(k+1)$-extreme points of $P$.

\begin{lemma}\label{lem:preproc}
  	For a given set $P$ of $n$ points in the plane, we can compute the set $E(P)$ of all $(k+1)$-extreme points of $P$ in $O(n)$ time.
\end{lemma}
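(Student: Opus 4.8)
The plan is to reduce the computation of $E(P)$ to a constant number of linear-time order-statistic (selection) queries, followed by a single linear scan. The crucial observation is that, under the standing assumption that no two points share an $x$- or a $y$-coordinate, membership in $E(P)$ can be decided by comparing each point against only four scalar threshold values, rather than by first producing four sorted prefixes.

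First I would compute four order statistics. Using the deterministic linear-time selection algorithm (median-of-medians), I would find in $O(n)$ time the $(k+1)$-th smallest $x$-coordinate $a_1$ and the $(k+1)$-th largest $x$-coordinate $a_2$ over all points of $P$, and analogously the $(k+1)$-th smallest and the $(k+1)$-th largest $y$-coordinates $b_1$ and $b_2$. Each of these four selections runs in $O(n)$ time, so this phase costs $O(n)$ in total. (Since a valid instance must satisfy $n-k \ge p \ge 1$, we have $k+1 \le n$ and all four thresholds are well defined; the degenerate case can be handled trivially.)

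Next I would decide membership by a single pass over $P$. By definition, a point $p = (x,y)$ lies in $E(P)$ precisely when its $x$-coordinate is among the $k+1$ smallest or the $k+1$ largest, or its $y$-coordinate is; because all coordinates are distinct, this is equivalent to the disjunctive test $x \le a_1 \lor x \ge a_2 \lor y \le b_1 \lor y \ge b_2$. Scanning the $n$ points once and evaluating this $O(1)$-time test for each, I would insert every point that passes into $E(P)$. Since the test is applied to each point exactly once, every extreme point is reported exactly once and no explicit duplicate elimination across the four families is needed; the scan therefore takes $O(n)$ time, and summing the two phases yields the claimed $O(n)$ bound.

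I do not expect a genuine obstacle here: the only delicate point is to avoid any sorting step, which would introduce a spurious $O(n \log n)$ (or $O(n \log k)$) factor. This is exactly what the use of linear-time selection, rather than partial sorting, circumvents, and collapsing the four sorted prefixes into four scalar thresholds is what makes the union computable in one pass without additional bookkeeping.
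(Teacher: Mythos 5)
Your proposal is correct and follows essentially the same route as the paper: both use linear-time selection to obtain the $(k+1)$-st smallest/largest $x$- and $y$-coordinates and then collect the qualifying points by a linear scan. The only (immaterial) difference is that you merge the four scans into one pass with a single disjunctive threshold test, whereas the paper performs the four symmetric selections and scans separately.
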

	We can use the standard selection algorithm~\cite{clrs-ia-01} to select the point $p_L$ of $P$ with $(k+1)$-st smallest $x$-coordinate in linear time. We then go through $P$ again to find all points with $x$-coordinate smaller than $p_L$. Finding the points $p_R, p_T, p_B$ and computing the rest of $E(P)$ is symmetric.

	The following lemma shows that the left side of the optimal solution of the \textsc{$(1,k)$-Square Covering} problem lies on or to the left of the vertical line through $p_L$, and that the right side lies on or to the right of the vertical line through $p_R$. Similarly, the top side of the optimal solution lies on or above the horizontal line through $p_T$, and the bottom side lies on or below the horizontal line through $p_B$.

\begin{lemma}\label{lem:key1} 
  	The optimal square $\optSQ$ that solves the \textsc{$(1,k)$-Square
    	Covering} problem is determined by the points of $E(P)$ only.
\end{lemma}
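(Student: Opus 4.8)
The plan is to show that, once reduced to its minimal form, the optimal square $\optSQ$ has its side length pinned down by exactly four points of $P$, and that each of these four points is $(k+1)$-extreme. First I would reduce to the minimal case. Let $Q\subseteq P$ be the set of (at least) $n-k$ points covered by $\optSQ$, and let $w,h$ be the width and height of the bounding box of $Q$. The smallest axis-aligned square enclosing $Q$ has side length $\max(w,h)$, still covers all of $Q$, and has area no larger than that of $\optSQ$; since $\optSQ$ is optimal, the two areas coincide. Hence I may assume $\optSQ$ has side length exactly $\max(w,h)$, a quantity fixed by the leftmost, rightmost, bottommost, and topmost points of $Q$, which I call $q_L,q_R,q_B,q_T$. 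Note that $\optSQ$ may still slide freely along its shorter dimension, but its side length, and therefore its area, is determined by these four points.

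The key step is an outlier-counting argument showing $q_L,q_R,q_B,q_T\in E(P)$. Consider $q_L$, the covered point of smallest $x$-coordinate. Any point of $P$ with $x$-coordinate strictly smaller than that of $q_L$ cannot be covered by $\optSQ$, for otherwise it would be a covered point lying to the left of $q_L$, contradicting the choice of $q_L$; thus all such points are outliers. Since there are at most $k$ outliers and, by the general-position assumption in force, all $x$-coordinates are distinct, at most $k$ points lie strictly to the left of $q_L$. Hence the $x$-coordinate of $q_L$ is among the $k+1$ smallest in $P$, so $q_L$ lies on or to the left of $p_L$ and $q_L\in E(P)$. The identical counting applied to the right, bottom, and top yields $q_R,q_B,q_T\in E(P)$. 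Combining the two steps, the side length of $\optSQ$ equals $\max(x_{q_R}-x_{q_L},\,y_{q_T}-y_{q_B})$, a value determined solely by the four points $q_L,q_R,q_B,q_T$, each of which lies in $E(P)$; therefore $\optSQ$ is determined by $E(P)$ only.

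I expect the main obstacle to be making the phrase ``determined by'' precise in the presence of the sliding freedom: a square covering a fixed point set is not unique, so the statement must be read as one about the side length (equivalently the area) together with which points of $P$ can force the boundary. Once I commit to the minimal enclosing square and verify that only the four axis-extreme covered points matter, the remaining rank-counting is routine and is exactly the bound foreshadowed by the placement of $p_L,p_R,p_T,p_B$ in the preceding discussion.
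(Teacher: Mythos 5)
Your proof is correct and takes essentially the same route as the paper's: the paper shrinks the optimal square until (at least three of) its edges each contain a point of $P$ and then observes that if such a determining point were not $(k+1)$-extreme there would be at least $k+1$ outliers beyond that edge --- precisely the per-side counting you apply to the four axis-extreme covered points $q_L, q_R, q_B, q_T$. Your preliminary reduction to the bounding box of the covered set is just a more explicit version of the paper's shrinking step (and of what ``determined by'' should mean, given the sliding freedom), so the two arguments coincide in substance.
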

\begin{proof}
	The covering square is convex, hence all outliers must come from outside the optimal square. As we want to minimize the area, there exists an optimal square $B^*$ such that at least three edges of $B^*$ each contain one point of $P$. If one edge, say the top edge, is determined by a point $p\in P\setminus E(P)$, it means that there are at least $k+1$ outliers above $B^*$, which is not allowed.
\end{proof}
        Using this lemma, we obtain an improved running time as follows:
\begin{theorem}\label{thm:1k}
  	Given a set $P$ of $n$ points in the plane, the \textsc{$(1,k)$-Square Covering} problem can be solved in $O(n+k\log k)$ expected time using $O(n)$ space.
\end{theorem}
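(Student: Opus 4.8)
The plan is to use Lemma~\ref{lem:key1} to collapse the instance from $n$ points down to the $O(k)$ points of $E(P)$ and then run Chan's randomized algorithm on this small set. First I would compute, in $O(n)$ time via Lemma~\ref{lem:preproc}, the set $E(P)$ of all $(k+1)$-extreme points together with the four selection points $p_L,p_R,p_T,p_B$ realizing the $(k+1)$-st smallest and largest $x$- and $y$-coordinates. Since $E(P)$ contains at most $4(k+1)=O(k)$ points, any near-linear routine run on $E(P)$ alone will cost only $O(k\log k)$.

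The key structural observation is that the $n-|E(P)|$ non-extreme points never constrain the optimum except through a single containment requirement. Indeed, every $p\in P\setminus E(P)$ has both coordinates strictly between the corresponding $(k+1)$-st extreme values, so $p$ lies strictly inside the core rectangle $R_0=[(p_L)_x,(p_R)_x]\times[(p_B)_y,(p_T)_y]$. By the same argument used in the proof of Lemma~\ref{lem:key1} (if, say, the left side of $\optSQ$ lay strictly to the right of $p_L$, there would be at least $k+1$ outliers to its left), the four sides of $\optSQ$ lie weakly outside $p_L,p_R,p_T,p_B$, whence $R_0\subseteq\optSQ$. Consequently $\optSQ$ covers all of $P\setminus E(P)$ automatically, and all outliers of $\optSQ$ are $(k+1)$-extreme.

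This reduces the task to the following small, constrained problem: among all axis-aligned squares that contain $R_0$, report one of minimum area that leaves at most $k$ points of $E(P)$ uncovered. I would solve it by invoking Chan's $(1,k)$-square-covering routine on the $O(k)$-point set $E(P)$, restricting the search to squares that contain $R_0$ — equivalently, adjoining the four corners of $R_0$ as mandatory, never-outlier points so that every feasible square contains their bounding box. As the input to this subroutine has size $O(k)$, it runs in $O(k\log k)$ expected time; together with the $O(n)$-time preprocessing and $O(n)$ working space this yields the claimed $O(n+k\log k)$ expected time in $O(n)$ space.

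The step I expect to be the main obstacle is making the containment constraint ``the square must contain $R_0$'' interact cleanly with Chan's algorithm without losing the $O(k\log k)$ bound: the black-box $(1,k)$-covering routine optimizes over \emph{all} squares, so I must confirm that forcing a constant number of corner points to be covered (or, equivalently, restricting the decision procedure to squares enclosing a fixed rectangle) incurs no asymptotic overhead. The remaining details — verifying $|E(P)|=O(k)$, and handling the degenerate regime $|E(P)|<k$ where fewer than $k$ outliers are forced and the answer is simply the smallest square containing $R_0$ — are routine.
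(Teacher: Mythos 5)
Your core reduction is exactly the paper's: compute $E(P)$ in $O(n)$ time via Lemma~\ref{lem:preproc}, then run Chan's randomized algorithm on the $O(k)$ points of $E(P)$, for $O(n+k\log k)$ expected time and $O(n)$ space. The one place you deviate---constraining the search to squares containing $R_0$, e.g.\ by adjoining the corners of $R_0$ as mandatory never-outlier points---is precisely the step you flag as the main obstacle, and it is in fact unnecessary. The pigeonhole argument you apply to $\optSQ$ applies verbatim to \emph{every} square that is feasible for the $(1,k)$-instance on $E(P)$ alone: the $k+1$ leftmost points of $P$ all belong to $E(P)$, so a square whose left side lies strictly to the right of $p_L$ already has $k+1$ outliers \emph{within} $E(P)$ and is infeasible for the subproblem, and the same holds for the other three sides. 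Hence every square leaving at most $k$ points of $E(P)$ uncovered automatically contains $R_0$, therefore automatically covers all of $P\setminus E(P)$, and the $(1,k)$-instances on $E(P)$ and on $P$ have identical feasible squares and identical optima. (In the degenerate regime $n<2k+2$, where the defining coordinates may cross and $R_0$ may be empty, one simply has $E(P)=P$ and there is nothing to prove.) Consequently you can run Chan's algorithm on $E(P)$ completely unmodified---no mandatory points, no constrained decision procedure---which is exactly what the paper does, and the claimed bound follows with no loose ends.
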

\begin{proof}
	We first compute the set of extreme points $E(P)$ in linear time and then run Chan's algorithm on the set $E(P)$. The time bound follows directly, since $|E(P)|  \leq 4k+4$.
\end{proof}

\subsubsection{$(2,k)$\textsc{-Square Covering}}
\label{sec:2ksquares}

	The following observation is crucial to solve the  \textsc{$(2,k)$-Square Covering} problem, where we look for two disjoint squares that cover $n-k$ points.
\begin{obs}\label{obs:key2}
  	For any two disjoint axis-aligned squares in the plane, there exists an axis-parallel line  $\ell$ that separates them.
\end{obs}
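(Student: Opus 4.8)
The plan is to reduce the two-dimensional separation question to two one-dimensional ones by projecting onto the coordinate axes. First I would write each square as a product of closed intervals, $S_1 = [a_1,b_1]\times[c_1,d_1]$ and $S_2 = [a_2,b_2]\times[c_2,d_2]$. The central fact about axis-aligned boxes is that their interiors intersect if and only if both coordinate projections overlap in their interiors; that is, $\mathrm{int}(S_1)\cap\mathrm{int}(S_2)\neq\emptyset$ exactly when $(a_1,b_1)\cap(a_2,b_2)\neq\emptyset$ and $(c_1,d_1)\cap(c_2,d_2)\neq\emptyset$. This is immediate from the product structure, since a point $(x,y)$ lies in the interior of $S_i$ precisely when $x\in(a_i,b_i)$ and $y\in(c_i,d_i)$.

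Next I would take the contrapositive. Since the two squares are disjoint --- recall that by the problem convention they may meet only along boundaries, so their interiors are disjoint --- at least one of the two projection pairs must fail to overlap: either $(a_1,b_1)\cap(a_2,b_2)=\emptyset$ or $(c_1,d_1)\cap(c_2,d_2)=\emptyset$. Suppose, without loss of generality, that the $x$-projections are interior-disjoint. Then the two closed intervals $[a_1,b_1]$ and $[a_2,b_2]$ can share at most a single point, so one lies entirely to the left of the other; say $b_1\le a_2$. The vertical line $\ell\colon x=t$ for any $t$ with $b_1\le t\le a_2$ then has $S_1$ weakly on its left and $S_2$ weakly on its right, so it separates them (possibly touching their boundaries, which the disjointness convention permits). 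The symmetric argument with a horizontal line handles the case in which the $y$-projections are interior-disjoint.

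I expect no serious obstacle here; the only point requiring care is the boundary convention. Because the problem treats the squares as closed sets that are allowed to touch at boundaries, \emph{disjoint} must be read as \emph{interior-disjoint}, and the separating line is likewise allowed to coincide with an edge of one or both squares. Phrasing the projection characterization in terms of open intervals (interiors) rather than closed ones is the single place where I would be careful, but once that is fixed the statement follows from a one-line case analysis. Note also that although the observation is stated for squares, the argument uses only the axis-aligned box structure, so it applies verbatim to rectangles as well.
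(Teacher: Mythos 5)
Your proof is correct. Note that the paper itself offers no proof of this statement at all: it is presented as a self-evident Observation, immediately put to use in the divide-and-conquer algorithm. Your projection argument (interiors of axis-aligned boxes intersect iff both open coordinate projections intersect, hence disjointness forces one projection pair to be interior-disjoint, yielding a vertical or horizontal separating line) is the natural formalization of what the authors took for granted, and you correctly handle the one subtlety, namely that ``disjoint'' here means interior-disjoint with boundary contact allowed, so the separating line may pass through shared edges. Your closing remark that the argument uses only the product structure and hence applies verbatim to rectangles is also consistent with the paper, which later asserts without proof that Observations~\ref{obs:key2} and~\ref{obs:key3} extend to rectangles. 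The only case your argument glosses over is fully degenerate (zero side length) squares, whose interiors are empty; for the nondegenerate squares relevant to the covering algorithm this is immaterial.
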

	This observation implies that there is always an axis-parallel
	line $\ell$ that separates the two optimal squares $(B^*_1, B^*_2)$ of the solution of a
	\textsc{$(2,k)$-Square Covering} problem. Let $\ell^+$ be the
	halfplane defined by $\ell$ that contains $B^*_1$.  Let $P^+$ be the
	set of points of $P$ that lie in $\ell^+$ (including points on $\ell$),
	and let $k^+$ be the
	number of outliers admitted by the solution of the \textsc{$(2,k)$-Square
  	Covering} problem that lie in $\ell^+$. Then
	there is always an optimal solution of the  \textsc{$(1,k^+)$-Square
  	Covering} problem for $P^+$ with size smaller than or
	equal to that of $B^*_1$.
	The same argument also holds for the other halfplane $\ell^-$, where we have $B^*_2$, and
	$k^-=k-k^+$.  Thus, the pair of optimal solutions of $B_1^*$ of the $(1,k^+)$-\textsc{Square Covering} problem and $B_2^*$ of the \textsc{$(1,k^-)$-Square Covering} problem is  an optimal solution of the original
	\textsc{$(2,k)$-Square Covering} problem.

\begin{figure}[t]
  \centering
  \includegraphics[width=0.70\textwidth]{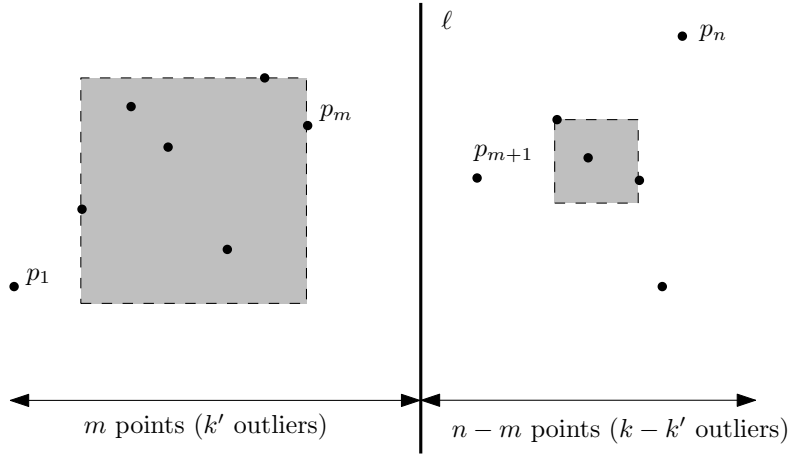}
  \caption{For given $k'$ and $m$, the optimal $m^*$ lies on the side
    with the larger square. }
  \label{fig:2squarebin}
\end{figure}

\begin{lemma}
	There exists an axis-parallel line $\ell$ and a positive integer
	$k'\leq k$ such that an optimal solution of the $(2,k)$-\textsc{Square Covering}
	problem for $P$ consists of the optimal solution of the $(1,k')$-\textsc{Square
	Covering} problem for $P^+$ and the $(1,k-k')$-\textsc{Square Covering} problem for $P^-$.
\end{lemma}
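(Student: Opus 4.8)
The plan is to start from an arbitrary optimal solution of the $(2,k)$ problem and reorganize it into two independent $(1,\cdot)$ problems. First I would fix an optimal disjoint pair $(B^*_1,B^*_2)$ for the $(2,k)$-\textsc{Square Covering} problem on $P$. By Observation~\ref{obs:key2} there is an axis-parallel line $\ell$ separating $B^*_1$ and $B^*_2$; let $\ell^+$ be the closed halfplane containing $B^*_1$ and $\ell^-$ the one containing $B^*_2$, and set $P^+=P\cap\ell^+$ and $P^-=P\cap\ell^-$ (with points on $\ell$ assigned to one fixed side). Let $k^+$ be the number of outliers of this solution lying in $\ell^+$, and put $k':=k^+$, so that $k-k'=k^-$ is the number of outliers in $\ell^-$.

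Next I would prove the ``easy'' inequality. Since $B^*_1\subseteq\ell^+$, it covers only points of $P^+$ and leaves exactly $k^+$ of them uncovered; hence $B^*_1$ is a feasible single square for the $(1,k')$-\textsc{Square Covering} problem on $P^+$, and so the optimal such square $\tilde B_1$ has area at most that of $B^*_1$. The symmetric argument on $\ell^-$ yields an optimal $(1,k-k')$ square $\tilde B_2$ for $P^-$ of area at most that of $B^*_2$. Therefore the larger of $\tilde B_1,\tilde B_2$ has area at most $\max(\mathrm{area}(B^*_1),\mathrm{area}(B^*_2))$, which is the optimal $(2,k)$ value.

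The crux, and the step I expect to be the main obstacle, is to show that the pair $(\tilde B_1,\tilde B_2)$ is itself a \emph{feasible} $(2,k)$ solution, i.e.\ that the two independently computed squares can be taken to be disjoint. All points covered by $\tilde B_1$ lie in $\ell^+$, so I would invoke the elementary fact that a smallest enclosing square of a point set may be chosen flush against the set's extreme side: among all smallest squares covering its $|P^+|-k^+$ chosen points, one can be positioned so that its edge facing $\ell$ passes through the extreme covered point, keeping the whole square inside $\ell^+$. Placing $\tilde B_1$ in $\ell^+$ and $\tilde B_2$ in $\ell^-$ then makes them disjoint across $\ell$, and together they cover $(|P^+|-k^+)+(|P^-|-k^-)=n-k$ distinct points. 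Hence $(\tilde B_1,\tilde B_2)$ is feasible for $(2,k)$, so its value is at least the optimal $(2,k)$ value.

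Combining the two inequalities forces equality, so for this choice of $\ell$ and $k'=k^+$ the pair formed by the optimal $(1,k')$ solution for $P^+$ and the optimal $(1,k-k')$ solution for $P^-$ is exactly an optimal $(2,k)$ solution, as claimed. The only points requiring care are the boundary convention for points lying on $\ell$ and the degenerate cases $k^+\in\{0,k\}$, where one side admits no outliers; neither affects the argument, and in the degenerate case the statement holds with the corresponding relaxation of the bound $k'\le k$.
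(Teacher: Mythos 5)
Your proof is correct and takes essentially the same route as the paper: separate an optimal pair $(B^*_1,B^*_2)$ by an axis-parallel line via Observation~\ref{obs:key2}, split the outlier budget as $k'=k^+$, and combine the two single-square optima. You additionally make explicit the feasibility step (that each optimal single square can be repositioned flush against its halfplane so the pair remains disjoint), which the paper leaves implicit here and only spells out later, in the translation argument of Lemma~\ref{lem:increasing_function}.
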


	We assume w.l.o.g. that $\ell$ is vertical, and we associate  $\ell$
	with $m$, the number of points that lie to the left of (or on) $\ell$.
	Let $p_1, p_2,\ldots,p_n$ be the list of points in $P$ sorted
	by $x$-coordinate. Then $\ell$
	partitions the points of $P$ into two subsets, a left point set,
	$P_L(m)=\{p_1, \ldots, p_{m}\}$ and a right point set,
	$P_R(m)=\{p_{m+1}, \ldots, p_n\}$, see Figure~\ref{fig:2squarebin}.
	The optimal left square is
	a solution of the \textsc{$(1,k')$-Square Covering} problem for $P_L(m)$
	for $0 \leq k'\leq k$, and the optimal right
	square is a solution of the \textsc{$(1,k-k')$-Square Covering}
	problem for $P_R(m)$.

	We can efficiently compute the optimal solutions for $P_L(m)$ and $P_R(m)$ in each halfplane of a vertical line $\ell$ using the above \textsc{$(1,k)$-Square Covering} algorithm. However, as we have to consider many partitioning lines, it is important to find an efficient way to compute the $(k+1)$-extreme points for each $P_L(m)$ and $P_R(m)$ corresponding to a particular line $\ell$. For this we use Chazelle's segment dragging query algorithm~\cite{c-asdii-88}.

\begin{lemma}[\cite{c-asdii-88}]\label{lem:segdrag}
  	Given a set $P$ of $n$ points in the plane, we can preprocess it in $O(n\log n)$ time and $O(n)$ space such that, for any axis--aligned orthogonal range query $Q$, we can find the point p of P that has the highest $y$-coordinate of all points inside the query range $Q$ in $O(\log n)$ time.
\end{lemma}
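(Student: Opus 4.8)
The plan is to reduce the four-sided orthogonal query to a three-sided query, for which Chazelle's segment-dragging structure \cite{c-asdii-88} applies directly, and then to push essentially all of the difficulty into that reduction rather than attacking the four-sided box head-on. Write the query range as $Q=[x_1,x_2]\times[y_1,y_2]$; we seek the point of $P\cap Q$ with the largest $y$-coordinate. The key observation is that, because we are \emph{maximizing} $y$, the bottom constraint $y\ge y_1$ never helps to locate the candidate point and is only needed to decide whether $P\cap Q$ is empty.

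First I would discard the bottom edge and consider the three-sided region $R=[x_1,x_2]\times(-\infty,y_2]$. Let $q$ be the point of $P\cap R$ of maximum $y$-coordinate. I claim that $q$ solves the original query: if $P\cap R=\emptyset$, or if $q$ exists but its $y$-coordinate is less than $y_1$, then $P\cap Q=\emptyset$, since any point of $P\cap Q$ would lie in $R$ (it satisfies $y\le y_2$) with $y$-coordinate $\ge y_1$, forcing $q$ to have $y$-coordinate $\ge y_1$. Conversely, if $q$ exists with $y$-coordinate $\ge y_1$, then $q\in Q$, and since $Q\subseteq R$ and $q$ maximizes $y$ over $P\cap R$, the point $q$ also maximizes $y$ over $P\cap Q$. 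Thus a single three-sided maximization, followed by one comparison of the $y$-coordinate of $q$ against $y_1$, answers the query.

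Next I would realize the three-sided query as a segment-dragging operation. Dragging a horizontal segment of $x$-extent $[x_1,x_2]$ downward, starting from height $y_2$, the first point of $P$ it meets is precisely the point of $P$ with largest $y$-coordinate among those with $x\in[x_1,x_2]$ and $y\le y_2$, i.e.\ the point $q$ above. Chazelle's data structure answers exactly such segment-dragging queries after $O(n\log n)$ preprocessing, using $O(n)$ space, in $O(\log n)$ time per query \cite{c-asdii-88}; invoking it and performing the constant-time check against $y_1$ yields the claimed bounds. By reflecting the coordinate axes, the same construction handles lowest-$y$, leftmost, or rightmost queries, so the particular orientation in the statement is without loss of generality.

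The hard part is obtaining $O(\log n)$ query time and $O(n)$ space \emph{simultaneously} for the extreme-point report: a direct two-level range tree would give either $\Theta(n\log n)$ space or $O(\log^2 n)$ query once one insists on reporting the extremal point inside a four-sided box. What makes the linear-space, logarithmic-query bound achievable is exactly that the three-sided (segment-dragging) version admits such a structure, via Chazelle's persistent construction (equivalently, a priority search tree layered over the $x$-order). Hence the crux of my argument is to show that the bottom edge can be deferred to a single final comparison, reducing the problem to the three-sided primitive of \cite{c-asdii-88}, after which the stated time and space bounds follow immediately.
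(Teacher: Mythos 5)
Your proposal is correct and rests on exactly the same foundation as the paper: the paper states this lemma as an imported result of Chazelle~\cite{c-asdii-88} with no proof of its own, and your reduction of the four-sided query to the three-sided segment-dragging primitive (deferring the bottom constraint $y \geq y_1$ to a single final comparison, which is valid since $Q \subseteq R$ and the maximization makes the bottom edge irrelevant except for emptiness detection) is the standard, correct way to bridge Chazelle's primitive to the statement as given.
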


	We repeatedly apply Lemma~\ref{lem:segdrag} as follows: We start to query with a rectangle $Q$ that has upper boundary at $+\infty$ to find the topmost point. We then set the upper boundary of the rectangle to the $y$-coordinate of the topmost point and query again with the new rectangle. Doing this $k+1$ times gives the $k+1$ points with highest $y$-coordinate in any halfplane.
	We rotate the set $P$ to find all other elements of $E(P)$ in the according half plane, and we get the following time bound.
\begin{cor}\label{cor:sdq}
  	After $O(n \log n)$ preprocessing time, we can compute the sets
  	$E(P_L(m))$ and $E(P_R(m))$ in $O(k\log n)$ time for any given $m$.
\end{cor}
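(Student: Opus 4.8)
The plan is to separate the computation of $E(P_L(m))$ into its four directional components --- the $k+1$ smallest and largest $x$-coordinates and the $k+1$ smallest and largest $y$-coordinates within $P_L(m)$ --- and likewise for $P_R(m)$, observing that the $x$-directional parts are free from a single global sort while the $y$-directional parts are exactly what Lemma~\ref{lem:segdrag} is designed to answer. In the preprocessing phase (done once, in $O(n\log n)$ time and $O(n)$ space) I would (i) sort $P$ by $x$-coordinate into the array $p_1,\dots,p_n$, and (ii) build the segment-dragging structure of Lemma~\ref{lem:segdrag} on $P$, together with a second copy built on $P$ with its $y$-coordinates negated, so that the same primitive answers both ``topmost point in a range'' and ``bottommost point in a range'' queries.

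For a fixed $m$, the $x$-extreme points come directly from the sorted array. Since $P_L(m)=\{p_1,\dots,p_m\}$ is precisely a contiguous prefix, its $k+1$ smallest-$x$ points are $p_1,\dots,p_{k+1}$ and its $k+1$ largest-$x$ points are $p_{m-k},\dots,p_m$ (and when $m\le k+1$ every point of $P_L(m)$ is $x$-extreme); symmetrically, $P_R(m)=\{p_{m+1},\dots,p_n\}$ contributes $p_{m+1},\dots,p_{m+k+1}$ and $p_{n-k},\dots,p_n$. Reading these off takes $O(k)$ time.

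The $y$-extreme points are where the segment-dragging structure is used. To obtain the $k+1$ highest points of $P_L(m)$, I would repeatedly query Lemma~\ref{lem:segdrag} on the half-infinite orthogonal range $Q=(-\infty,x(p_m)]\times(-\infty,b]$: start with $b=+\infty$ to retrieve the topmost point, then reset $b$ to lie just below the $y$-coordinate of the point just reported and query again, iterating $k+1$ times. The bottom $k+1$ points of $P_L(m)$ are found by the same loop on the negated copy, and the $y$-extremes of $P_R(m)$ use the range $[x(p_{m+1}),+\infty)\times(-\infty,b]$ instead. Each of these four batches costs $k+1$ queries at $O(\log n)$ each, so $O(k\log n)$ in total; assembling the at most $4(k+1)$ reported points yields $E(P_L(m))$ and $E(P_R(m))$.

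The main obstacle to make rigorous is the correctness of the iterated dragging query: I must argue that lowering the upper boundary to just under the last reported $y$-coordinate excludes exactly the points already enumerated while retaining all others, so that the loop reports the $k+1$ highest points of the slab in decreasing $y$-order with no repetition or omission. This is clean because we assume at the start of Section~\ref{sec:algs} that no two points share a $y$-coordinate, so ``just below the $y$-coordinate of the last point'' is unambiguous and removes a single point per step; the only care needed is the boundary case where the slab contains fewer than $k+1$ points, in which the loop simply terminates early and all such points are correctly declared extreme. Combining the $O(k)$ for the $x$-direction with the $O(k\log n)$ for the $y$-direction, after the $O(n\log n)$ one-time preprocessing, gives the claimed $O(k\log n)$ query bound.
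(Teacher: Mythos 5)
Your proof is correct and takes essentially the same approach as the paper: repeat the segment-dragging query of Lemma~\ref{lem:segdrag} $k+1$ times per direction, shrinking the query range past each reported point, for $O(k\log n)$ time per value of $m$ after $O(n\log n)$ preprocessing. The only differences are minor refinements on your part --- you read the $x$-extremes of $P_L(m)$ and $P_R(m)$ directly off the $x$-sorted array in $O(k)$ time where the paper handles all four directions with (rotated) segment-dragging structures, and you are more careful than the paper's wording about placing the new boundary strictly below the last reported $y$-coordinate, which is justified by the general-position assumption.
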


	Before presenting our algorithm we need the following lemma:
\begin{lemma}\label{lem:increasing_function}
  	For a fixed $k'$, the area of the solution of the \textsc{$(1,k')$-Square Covering} problem for $P_L(m)$ is an increasing function of $m$.
\end{lemma}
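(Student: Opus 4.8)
The plan is to prove the stronger quantitative statement that the optimal area $A(m)$ of the \textsc{$(1,k')$-Square Covering} problem on $P_L(m)$ satisfies $A(m)\le A(m+1)$ for every admissible $m$; monotone non-decreasingness of the function then follows by transitivity. The driving observation is the containment $P_L(m)\subseteq P_L(m+1)$: the two instances differ only by the single point $p_{m+1}$, which is the rightmost point of $P_L(m+1)$. Hence any square that is feasible for the larger instance is ``almost'' feasible for the smaller one, and I would turn an optimal square of $P_L(m+1)$ into a feasible square for $P_L(m)$ of no larger area.

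Concretely, let $B^*_{m+1}$ be an optimal square for $P_L(m+1)$, so it covers exactly $(m+1)-k'$ points of $P_L(m+1)$ and has area $A(m+1)$. I would distinguish two cases according to whether the new point $p_{m+1}$ is covered by $B^*_{m+1}$. If $p_{m+1}\notin B^*_{m+1}$, then all $(m+1)-k'=(m-k')+1$ covered points lie in $P_L(m)$; I then shrink one side of $B^*_{m+1}$ inward just past its extreme covered point to discard exactly one covered point, obtaining a square of area at most $A(m+1)$ that covers exactly $m-k'$ points of $P_L(m)$, i.e. with exactly $k'$ outliers in $P_L(m)$. If instead $p_{m+1}\in B^*_{m+1}$, then discarding $p_{m+1}$ from the count shows $B^*_{m+1}$ already covers exactly $(m-k')$ points of $P_L(m)$ and is itself feasible. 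In either case we exhibit a feasible square for the \textsc{$(1,k')$-Square Covering} problem on $P_L(m)$ of area at most $A(m+1)$, so $A(m)\le A(m+1)$.

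I expect the only delicate point to be the shrinking step in the case $p_{m+1}\notin B^*_{m+1}$: one must argue that a square covering $(m-k')+1$ points can always be reduced to cover exactly $m-k'$ of them without increasing its area. This is exactly where the standing general-position assumption (no two input points share an $x$- or $y$-coordinate) is used, so that pushing one bounding side inward removes a single point at a time; the same remark quietly reconciles the ``exactly $n-k$'' and ``at least $n-k$'' formulations, since a minimum-area square covering at least $t$ points can always be shrunk to cover exactly $t$. The remainder is routine two-case bookkeeping on the outlier counts, and no further geometric input is required.
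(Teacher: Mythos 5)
Your overall strategy is exactly the paper's: turn the optimal square $B_1^*$ for $P_L(m+1)$ into a feasible square for $P_L(m)$ of no larger area, which gives $A(m)\le A(m+1)$. The paper handles your two cases uniformly by translating $B_1^*$ horizontally to the left until it no longer crosses the vertical line $\ell$ through $p_m$ --- this keeps every covered point of $P_L(m)$ covered, expels $p_{m+1}$, and has the side benefit (used later by the algorithm for disjointness) of leaving the square in the left halfplane. Your case distinction on whether $p_{m+1}\in B_1^*$ is an equally valid way to organize the argument, and your Case 2 is verbatim correct.

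The problem is the adjustment step in Case 1. ``Shrinking one side of $B^*_{m+1}$ inward'' produces a rectangle, not a square, so the object you exhibit is not feasible for the \textsc{$(1,k')$-Square Covering} problem. Moreover, the general-position assumption you invoke does not rescue an honest square shrink: shrinking a square while keeping it a square (say, anchored at a corner) can expel two points at the same instant --- for example $(5,1)$ and $(1,5)$ both leave $[0,5]\times[0,5]$ the moment the side length drops below $5$ --- because ``no two points share an $x$- or $y$-coordinate'' does not forbid $x_i=y_j$. The repair is to translate rather than shrink: slide the square horizontally at fixed side length; since all $x$-coordinates are distinct, points of $P_L(m)$ cross the two vertical sides one at a time, so the number of covered points of $P_L(m)$ changes by at most one at each event and must pass through the value $m-k'$ exactly. (The paper is itself terse here, saying only ``we can shrink or translate $B$,'' but translation is the operation that actually works; alternatively, under the ``at least $n-k$'' formulation of the problem no adjustment is needed at all, since in your Case 1 the square $B^*_{m+1}$ is already feasible for $P_L(m)$ as it stands.)
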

\begin{proof}
  	Consider the set $P_L(m+1)$ and the optimal square $B_1^*$ of the
  	\textsc{$(1,k')$-Square Covering} problem for $P_L(m+1)$.  Clearly,
  	$P_L(m+1)$ is a superset of $P_L(m)$, as it contains one more point
  	$p_{m+1}$. Since $k'$ is fixed, the square $B_1^*$ has $k'$ outliers in
  	$P_L(m+1)$. If the interior of $B_1^*$ intersects the vertical line
  	$\ell$ through $p_m$, we translate $B_1^*$ horizontally to the left
  	until it stops intersecting $\ell$. Let $B$ be the translated
  	copy of $B_1^*$, then $B$ lies in the left halfplane of $\ell$ and
  	there are at most $k'$ outliers admitted by $B$ among the points in
  	$P_L(m)$. Therefore we can shrink or translate $B$ and get a square
  	inside the left halfplane of $\ell$ that has exactly $k'$ outliers
  	and a size at most that of $B_1^*$. Thus, the optimal square for $P_L(m)$ has a size smaller or equal to that of $B_1^*$.
\end{proof}

	Lemma~\ref{lem:increasing_function} immediately implies the following corollary.
\begin{cor}\label{cor:monotone}
   	Let $(B_1, B_2)$ be the solution of the \textsc{$(2,k)$-Square Covering} problem with separating line $\ell$ with index $m$. 
   	Then, the index $m^*$ of the optimal separating line $\ell^*$ is at most $m$ if the left square $B_1$ is larger than the right square $B_2$; otherwise it holds that $m^* \geq m$.
\end{cor}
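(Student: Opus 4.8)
The plan is to read Corollary~\ref{cor:monotone} with the split $k'$ of outliers held fixed, exactly as in Figure~\ref{fig:2squarebin}, and to derive it from Lemma~\ref{lem:increasing_function} together with its mirror image. So fix $k'$ and let $B_1$ be the optimal \textsc{$(1,k')$-Square Covering} solution for $P_L(m)$ and $B_2$ the optimal \textsc{$(1,k-k')$-Square Covering} solution for $P_R(m)$; write $L(m)$ and $R(m)$ for their areas. The objective realized by the separating line at index $m$ is then $\phi(m)=\max\{L(m),R(m)\}$, and $m^*$ is an index minimizing $\phi$.

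First I would record the two monotonicities. By Lemma~\ref{lem:increasing_function}, $L(m)$ is nondecreasing in $m$. For the right side, note that $P_R(m)=\{p_{m+1},\dots,p_n\}$ loses points from its left end as $m$ grows; reflecting $P$ across a vertical axis turns $P_R(m)$ into a left prefix of the reflected sequence whose size increases as $m$ decreases. Applying Lemma~\ref{lem:increasing_function} to this reflected instance shows that $R(m)$ is nonincreasing in $m$. Thus $L$ climbs and $R$ falls as $m$ increases, so $\phi$ is valley-shaped (quasiconvex) along the line, which is precisely what later justifies a binary search on $m$ for each fixed $k'$.

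The two cases then follow immediately. Suppose $B_1$ is larger than $B_2$, i.e.\ $L(m)>R(m)$, so that $\phi(m)=L(m)$. For any index $m'>m$ we have $L(m')\ge L(m)$ by monotonicity, hence $\phi(m')\ge L(m')\ge L(m)=\phi(m)$; moving the line to the right never decreases the objective, so some minimizer satisfies $m^*\le m$. The case $B_1<B_2$ is symmetric: then $\phi(m)=R(m)$, and for any $m'<m$ monotonicity gives $R(m')\ge R(m)$, whence $\phi(m')\ge R(m')\ge R(m)=\phi(m)$, so no improvement lies to the left and $m^*\ge m$.

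The only points that require care — and the mild obstacle — are the tie and boundary situations. The two monotonicities are only weak, so I would be explicit that ``increasing'' in Lemma~\ref{lem:increasing_function} is read as nondecreasing, keeping the chains of inequalities above valid when successive areas coincide; and when $L(m)=R(m)$ the index $m$ is itself optimal, so either stated inequality holds. No further estimates are needed: the corollary is a direct consequence of the two monotonicities and the resulting quasiconvexity of $\phi$.
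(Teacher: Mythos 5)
Your proof is correct and is essentially the paper's argument: the paper states that Corollary~\ref{cor:monotone} follows immediately from Lemma~\ref{lem:increasing_function}, and your write-up simply makes that implication explicit (fixing $k'$, applying the lemma to $P_L(m)$, applying its mirror image to $P_R(m)$, and using the resulting quasiconvexity of the max to handle the two cases and ties). Nothing in your argument deviates from the intended route; it just supplies the details the paper leaves implicit.
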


	To solve the \textsc{$(2,k)$-Square Covering} problem, we start with the vertical line $\ell$ at the median
	of the $x$-coordinates of all points in $P$. For a given $m$,
	we first compute the sets $E(P_L(m))$ and $E(P_R(m))$. Then we
	use these sets in the call to the
	\textsc{$(1,k')$-Square Covering} problem for $P_L(m)$ and the
	\textsc{$(1,k-k')$-Square Covering} problem for $P_R(m)$, respectively,
	and solve the subproblems independently.
	The solutions of these subsets give the first candidate for the solution
	of the \textsc{$(2,k)$-Square Covering} problem, and we now compare
	the areas of the two obtained squares. According to Corollary~\ref{cor:monotone}, we can discard one of the halfplanes created by $\ell$ (see Figure~\ref{fig:2squarebin}), hence, we can use binary search to find the optimal index $m^*$ for the given $k'$. As the value of $k'$ that leads to the overall optimal solution is unknown, we need to do this for every  possible $k'$. Finally, we also need to examine horizontal separating lines by reversing the roles
	of $x$- and $y$-coordinates.

\begin{theorem}\label{thm:2ksolution}
  	For a set $P$ of $n$ points in the plane, we can solve the
  	\textsc{$(2,k)$-Square Covering} problem in
  	$O(n \log n+k^2\log^2 k)$ expected time using $O(n)$ space.
\end{theorem}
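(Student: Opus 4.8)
The plan is to turn the structural facts already established into a divide-and-conquer search driven by the existence of a separating axis-parallel line. By Observation~\ref{obs:key2} there is an axis-parallel line $\ell$ separating the two optimal squares, and the accompanying discussion shows that an optimal solution decomposes into an optimal $(1,k')$-Square Covering of $P_L(m)$ and an optimal $(1,k-k')$-Square Covering of $P_R(m)$ for some threshold index $m$ and some split $0\le k'\le k$ of the outliers. So the whole problem reduces to searching over the unknowns $k'$ (the number of left outliers), $m$ (the index of the separating line), and the two orientations of $\ell$, while solving each one-square subproblem with the base-case algorithm of Theorem~\ref{thm:1k}.

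First I would preprocess: sort $P$ by $x$- and by $y$-coordinate and build the segment-dragging structure of Lemma~\ref{lem:segdrag}, all in $O(n\log n)$ time and $O(n)$ space. Then I would fix the orientation (say $\ell$ vertical) and loop over every candidate split $k'\in\{0,\dots,k\}$. For a fixed $k'$, the key is that by Lemma~\ref{lem:increasing_function} the side length of the optimal left square is nondecreasing in $m$ and, symmetrically, the side length of the optimal right square is nonincreasing in $m$; hence the larger of the two is a unimodal (valley-shaped) function of $m$ whose minimizer $m^*$ is exactly where the two curves cross. Corollary~\ref{cor:monotone} tells me which half to keep after a single probe --- if the left square is the larger one at the current $m$ then $m^*\le m$, otherwise $m^*\ge m$ --- so I can binary search for $m^*$. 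At each probed $m$ I would extract $E(P_L(m))$ and $E(P_R(m))$ in $O(k\log n)$ time by Corollary~\ref{cor:sdq}, run Chan's algorithm (Theorem~\ref{thm:1k}) on these two sets, each of size $O(k)$, in $O(k\log k)$ expected time, and compare the two resulting areas. Because $P_L(m)$ and $P_R(m)$ lie on opposite sides of $\ell$, the two minimum-area squares returned are automatically disjoint, so every probe yields a feasible candidate. Taking the best candidate over all $k'$ and over both orientations gives the optimum.

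For the running time I would account as follows: the outer loop has $O(k)$ values of $k'$, each binary search makes $O(\log n)$ probes, and each probe costs $O(k\log n)$ for the extreme sets plus $O(k\log k)$ for the two one-square calls, i.e. $O(k\log n)$ in total. This yields $O(k^2\log^2 n)$ beyond the $O(n\log n)$ preprocessing, and the space stays $O(n)$ since the extreme sets and Chan's working set have size $O(k)$. The main obstacle --- really the only non-routine point --- is reconciling this natural bound with the claimed $O(n\log n+k^2\log^2 k)$. I would resolve it by proving the absorption inequality $k^2\log^2 n = O(n\log n + k^2\log^2 k)$: when $k\ge\sqrt n$ one has $\log n\le 2\log k$, so the term is $O(k^2\log^2 k)$; when $k\le\sqrt{n/\log n}$ one has $k^2\log^2 n\le n\log n$; and in the narrow remaining window $\sqrt{n/\log n}<k<\sqrt n$ the value $k$ is polynomially close to $\sqrt n$, so $\log k=\Theta(\log n)$ and the term is again $O(k^2\log^2 k)$. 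Hence the stated expected time and linear space follow.
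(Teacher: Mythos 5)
Your proposal is correct and follows essentially the same route as the paper: binary search over the separating index $m$ guided by Corollary~\ref{cor:monotone}, an outer loop over $k'$ and both orientations, segment-dragging queries plus Chan's algorithm at each probe, and an absorption argument turning $O(k^2\log^2 n)$ into $O(n\log n + k^2\log^2 k)$. The only cosmetic difference is your three-case absorption analysis, where the paper uses a two-case split at $k^4 \lessgtr n$; both are valid.
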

\begin{proof}
	After $O(n\log n)$ preprocessing time, we have $O(k \log n)$ different queries, each of which takes $O(k \log n)$ time, which gives a total running time of $O(n \log n + k^2 \log n)$. We can show that this is equal to $O(n \log n + k^2 \log^2 k)$ by distinguishing the following two cases:

	If $k^4 \leq n$, then it holds for the second term that $k^2 \log^2 n \leq \sqrt{n}\log^2 n \in O(n)$, so the second term is asymptotically smaller than the first, and we have $O(n \log n + k^2 \log^2 n) = O(n \log n)$.

	If $n< k^4$, then $\log n < \log k^4 \in O(\log k)$, so the second term is asymptotically bounded by $O( k^2 \log^2 k)$, and altogether we have in this case  $O(n \log n + k^2 \log^2 n) = O(n \log n + k^2 \log^2 k)$.
	Hence, in both cases the asymptotic time bound is $O(n \log n + k^2 \log^2 k)$.
\end{proof}

\subsubsection{$(3,k)$\textsc{-Square Covering}}\label{sec:3ksquares}
	The above solution for the \textsc{$(2,k)$-Square Covering}
	problem suggests a recursive approach for the general
	\textsc{$(p,k)$-Square Covering} case: Find an
	axis-parallel line that separates one square from the others and
	recursively solve the induced subproblems. We can do this for $p = 3$, as
	Observation~\ref{obs:key2} can be generalized as follows.

\begin{obs}\label{obs:key3}
  	For any three pairwise-disjoint, axis-aligned squares in the plane, there always exists an
  	axis-parallel line $\ell$ that separates one square from the others.
\end{obs}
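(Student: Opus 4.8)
The plan is to reduce this two-dimensional separation question to a one-dimensional statement about the projections of the three squares, and then to finish with a short pigeonhole count. I would project the three squares $S_1,S_2,S_3$ orthogonally onto the $x$-axis, obtaining closed intervals $I_1,I_2,I_3$, and onto the $y$-axis, obtaining $J_1,J_2,J_3$. The whole argument rests on the elementary fact already underlying Observation~\ref{obs:key2}: two axis-aligned squares have intersecting interiors if and only if both their $x$-projections and their $y$-projections have intersecting interiors; equivalently, they are disjoint exactly when at least one pair of projections is disjoint.

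Next I would characterize when a vertical line can isolate a single square. I claim that a vertical line separates some $S_i$ from the other two if and only if some interval $I_i$ is disjoint from both other $x$-intervals. The ``only if'' is immediate; for the ``if'' direction, if the isolated interval happens to lie between the other two, then all three $x$-intervals are pairwise disjoint and linearly ordered, so the leftmost one is isolated and can be cut off. Phrasing this through the \emph{$x$-overlap graph} $G_x$ (on vertices $\{1,2,3\}$, with an edge $ij$ whenever $I_i$ and $I_j$ overlap), the statement becomes: no vertical line separates a single square iff $G_x$ has no isolated vertex. On three vertices this is equivalent to $G_x$ having at least two of the three possible edges (a path or a triangle). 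The same holds verbatim for the $y$-overlap graph $G_y$ and horizontal lines.

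Finally I would derive the contradiction. Suppose no axis-parallel line separates one square from the other two. Then $G_x$ contains at least two of the three possible edges, and so does $G_y$; hence each graph omits at most one pair, so at most two pairs are omitted in total, and some pair $\{i,j\}$ is an edge of both $G_x$ and $G_y$. But then the interiors of $I_i$ and $I_j$ meet, and so do those of $J_i$ and $J_j$, whence $S_i$ and $S_j$ have intersecting interiors, contradicting pairwise disjointness. Therefore some axis-parallel line must separate one square from the other two.

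I expect the only delicate point to be the bookkeeping about closed versus open boundaries: since the squares are closed but permitted to touch along edges or corners, ``disjoint'' must be read as ``interior-disjoint,'' and correspondingly ``overlap'' in the projection graphs must mean ``interiors intersect,'' so that a separating line is allowed to run along a shared boundary. Once this convention is fixed, both the projection fact and the pigeonhole step go through unchanged. I would also remark that the argument uses only that the regions are axis-aligned boxes, so the identical statement holds for rectangles as well, which is exactly the form needed for the divide-and-conquer algorithms that follow.
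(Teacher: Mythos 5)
Your proof is correct; note, though, that the paper itself states Observation~\ref{obs:key3} (like Observation~\ref{obs:key2}) without any proof, so there is no argument in the paper to compare yours against --- what you have written supplies the justification the paper treats as self-evident. Your argument is complete, and the one delicate step is handled properly: in your characterization of vertical separability, when the interval with interior-disjoint projection lies \emph{between} the other two, the separating line ends up cutting off a \emph{different} square than the one whose projection is isolated, and you correctly reduce this to the case of three linearly ordered intervals. The pigeonhole finish (two graphs on three vertices, each containing at least two of the three possible edges, must share an edge, which contradicts interior-disjointness via the product structure of boxes) is airtight. Two aspects of your write-up are worth keeping: the argument nowhere uses equal side lengths, so it literally proves the rectangle version that Section~\ref{sec:rectangles} relies on when it asserts that the observations extend straightforwardly to rectangles; and your counting makes transparent why the claim dies at $p=4$ --- on four vertices, two edge-disjoint perfect matchings form overlap graphs with no isolated vertex and no common edge, which is exactly the pinwheel configuration of Figure~\ref{fig:counter} (left). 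One small caveat: your equivalence ``a vertical line isolates some square iff some $x$-interval is interior-disjoint from the other two'' tacitly assumes the squares are nondegenerate, since interior-disjoint \emph{nondegenerate} intervals are linearly ordered, whereas a degenerate (single-point) interval can sit inside another interval's interior; if zero-area squares were admitted, that step would need a separate remark, but for the nondegenerate boxes the statement clearly intends, nothing is missing.
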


	Again we assume that the separating line $\ell$ is vertical and that the left 	halfplane only contains one square. Since Corollary
	\ref{cor:monotone} can be generalized to \textsc{$(3, k)$-Square Covering}, we solve this case as before: fix the amount of outliers permitted on the left halfplane to $k'$ and iterate $k'$ from $1$ to $k$ to obtain the optimal $k^*$. For each possible $k'$, we recursively solve the two subproblems to the left and right of $\ell$ and use the solutions to obtain the optimal index $m^*$ such that the area of the largest square is minimized. Preprocessing consists of sorting the points of $S$ in both $x$- and $y$-coordinates and computing the segment dragging query structure, which can be done in $O(n \log n)$ time.

	In the left halfplane, we solve the $(1,k')$ subproblem as before; its running time is subsumed by the time needed to solve the $(2,k-k')$ subproblem in the right halfplane. Each \textsc{$(2,k-k')$-Square Covering} subproblem is solved as described above, except that preprocessing in the recursive steps is no longer needed: The segment dragging queries can be performed directly since the preprocessing has been done in the higher level. Also, for the binary search, we can use the sorted list of all points in $P$, which is a superset of $P_R(m)$.

	This algorithm has a total time complexity of $O(n \log n+ k^3\log^3 n) = O(n \log n + k^3 \log^3 k)$ (as before by distinguishing $k^6\leq n$ from $k^6>n$).

\begin{theorem}
  	For a set $P$ of $n$ points in the plane, we can solve the \textsc{$(3,k)$-Square Covering} problem in $O(n \log n+k^3\log^3 k)$ expected time using $O(n)$ space.
\end{theorem}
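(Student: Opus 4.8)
The plan is to mirror the divide-and-conquer scheme used for the $(2,k)$ case. By Observation~\ref{obs:key3}, in any optimal solution $(B_1^*,B_2^*,B_3^*)$ there is an axis-parallel line $\ell$ separating one square from the other two; I will assume $\ell$ is vertical and that its left halfplane contains a single square, and recover the remaining cases (right halfplane single, and horizontal $\ell$) by symmetry, at the cost of only a constant factor. As before I associate $\ell$ with the index $m$ so that $P_L(m)=\{p_1,\dots,p_m\}$ and $P_R(m)=\{p_{m+1},\dots,p_n\}$, and I fix the number $k'$ of outliers permitted on the left, with $0\le k'\le k$. For each choice of $(m,k')$, the left part is \emph{exactly} a \textsc{$(1,k')$-Square Covering} instance on $P_L(m)$, solved through Theorem~\ref{thm:1k}, and the right part is a \textsc{$(2,k-k')$-Square Covering} instance on $P_R(m)$, solved through the algorithm of Theorem~\ref{thm:2ksolution}. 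Since every optimal solution admits such a separator, iterating over all $k'$ and optimizing over $m$ is guaranteed to find the optimum.

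To avoid testing all $m$, I would invoke the monotonicity of Lemma~\ref{lem:increasing_function}: for fixed $k'$ the area of the optimal left single square is non-decreasing in $m$, while the same argument applied to the shrinking set $P_R(m)$ shows that the area of the largest square in the optimal right two-square solution is non-increasing in $m$. Hence the objective --- the maximum over the three squares --- is unimodal in $m$, and the generalized form of Corollary~\ref{cor:monotone} lets me discard one side after each probe: if the left square is the larger, the optimal index satisfies $m^*\le m$, otherwise $m^*\ge m$. A binary search on $m$ with $O(\log n)$ probes therefore suffices for each fixed $k'$. A key point is that all preprocessing is performed once at the top level: I sort $P$ by $x$- and by $y$-coordinate and build Chazelle's segment-dragging structure of Lemma~\ref{lem:segdrag} on all of $P$ in $O(n\log n)$ time, after which the extreme sets $E(P_L(m))$ and $E(P_R(m))$ are available in $O(k\log n)$ time by Corollary~\ref{cor:sdq}, and the inner $(2,k-k')$ calls reuse this structure and the global sorted list of $P$ (a superset of $P_R(m)$) without repeating the preprocessing.

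For the running time, each probe at a fixed $(m,k')$ costs $O(k\log n)$ to extract the extreme sets, $O(k\log k)$ to solve the left $(1,k')$ instance on $E(P_L(m))$, and $O(k^2\log^2 n)$ for the right $(2,k-k')$ instance (the cost of Theorem~\ref{thm:2ksolution} without its preprocessing), so $O(k^2\log^2 n)$ in total. Multiplying by the $O(\log n)$ probes of the outer binary search and the $O(k)$ choices of $k'$ gives $O(k^3\log^3 n)$, and together with the one-time preprocessing the total is $O(n\log n+k^3\log^3 n)$. I then simplify this to $O(n\log n+k^3\log^3 k)$ exactly as in the proof of Theorem~\ref{thm:2ksolution}, distinguishing $k^6\le n$ (so that $k^3\log^3 n\le\sqrt{n}\log^3 n\in O(n)$) from $k^6>n$ (so that $\log n\in O(\log k)$). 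The space bound is $O(n)$, since the recursion depth is the constant $3$ and every auxiliary structure is linear. The main obstacle I anticipate is not the bookkeeping but the careful justification that the monotonicity of Lemma~\ref{lem:increasing_function} really does lift to the combined three-square objective --- that the left single square and the right two-square optimum vary monotonically and oppositely in $m$, so that a single binary search is valid --- together with the accompanying argument that the top-level preprocessing legitimately serves all nested $(2,k-k')$ invocations, which is what keeps the $n\log n$ term from reappearing inside the loops.
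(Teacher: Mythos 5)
Your proposal is correct and follows essentially the same route as the paper's: the separating line from Observation~\ref{obs:key3}, an outer loop over $k'$, a binary search over $m$ justified by the generalization of Corollary~\ref{cor:monotone}, a left $(1,k')$ base case with a right $(2,k-k')$ recursive call that reuses the top-level sorting and segment-dragging preprocessing, and the same $O(n\log n + k^3\log^3 n) = O(n\log n + k^3\log^3 k)$ accounting via the $k^6 \le n$ versus $k^6 > n$ case distinction. If anything, you spell out the per-probe cost breakdown and the opposite-monotonicity justification more explicitly than the paper does.
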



\subsection{Covering Points with Rectangles}\label{sec:rectangles}
	We now look at the \textsc{$(p,k)$-Rectangle Covering}
	problem, where we want to cover $n-k$ points with pairwise--disjoint rectangles. It is straightforward to extend Lemma~\ref{lem:key1} as well as Observations~\ref{obs:key2} and~\ref{obs:key3} to rectangles, so we can use the same approach to solve the
	\textsc{$(p,k)$-Rectangle Covering} problem as for the
	\textsc{$(p,k)$-Square Covering} problem when $p\leq 3$.

	Chan's algorithm~\cite{c-garot-99}, however, does not apply to the $(1,k)$-\textsc{Rectangle Covering} problem, that means that once we have computed the set of $(k+1)$-extreme points, we need to test all rectangles that cover $n-k$ points. Our approach is an exhaustive search: We store the points of $E(P)$ separately in four sorted lists, the top $k+1$ points in $T(P)$, the bottom $k+1$ points in $B(P)$, and the left and right $k+1$ points in $L(P)$, and $R(P)$, respectively. Note that some points may belong to more than one set.

	We first create a vertical slab by drawing two vertical lines through one point of $L(P)$ and $R(P)$ each. All $k'$ points outside this slab are outliers, which leads to $k-k'$ outliers that are still permitted inside the slab. We now choose two horizontal lines through points in $T(P)$ and $B(P)$ that lie inside the slab, such that the rectangle that is formed by all four lines admits exactly $k$ outliers. It is easy to see that whenever the top line is moved downwards, also the bottom line must move downwards, as we need to maintain the correct number of outliers throughout. Inside each of the $O(k^2)$ vertical slabs, there are at most $k$ horizontal line pairs we need to examine, hence we can find the smallest rectangle covering $n-k$ points in $O(k^3)$ time when the sorted lists of $E(P)$ are given. This preprocessing takes $O(n+k \log k)$ time.
	We get the following theorem:
\begin{theorem}
	Given a set $P$ of $n$ points in the plane, we can solve the
  	\textsc{$(1,k)$-Rectangle Covering} problem in  $O(n+k^3)$ time using $O(n)$ space.
\end{theorem}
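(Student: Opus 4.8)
The plan is to reduce the search to the $O(k)$ extreme points of $P$ and then run a structured exhaustive search over the four sides of the rectangle. First I would invoke the rectangle analogue of Lemma~\ref{lem:key1}: in any covering with at most $k$ outliers, each side of the optimal rectangle passes through a $(k+1)$-extreme point, so the left, right, top, and bottom sides are supported by points of $L(P)$, $R(P)$, $T(P)$, and $B(P)$ respectively. I would also record the sharper observation that \emph{every} outlier is itself $(k+1)$-extreme: a point lying outside such a rectangle is beyond one of its four sides, and at most $k$ points can lie beyond a given side, so that point is among the $k+1$ most extreme in the corresponding direction and hence belongs to $E(P)$. This is exactly what lets the outlier count depend only on the $O(k)$ extreme points rather than on all of $P$. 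Using Lemma~\ref{lem:preproc} I would compute $E(P)$ in $O(n)$ time and sort its four sublists in $O(k\log k)$ time, which accounts for the $O(n + k\log k)$ preprocessing.

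Next I would enumerate the $O(k^2)$ vertical slabs obtained by choosing the left side from $L(P)$ and the right side from $R(P)$. For a fixed slab, the ranks of the two chosen supporting points give, in $O(1)$ time, the number $k'$ of points lying to the left or right of the slab; these are forced outliers, so the remaining $k-k'$ outliers must lie inside the slab, either above the top line or below the bottom line. Because a point inside the slab cannot also be a left/right outlier, this is a clean partition of the outlier set and avoids any double counting at the corners; moreover, since every outlier is extreme, the inside-slab outliers above the top are precisely the points of $T(P)$ inside the slab above the line, and symmetrically for the bottom. I would then sweep the top line downward through the $T(P)$-points inside the slab while sliding the bottom line: lowering the top line by one point increases the number of outliers above it by one, so to keep the total equal to $k$ the bottom line must drop by one as well. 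This monotone coupling yields only $O(k)$ feasible (top, bottom) pairs per slab, traversable by two pointers, and for each I would evaluate the rectangle area in $O(1)$ time and retain the minimum.

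For the complexity I would charge $O(k^2)$ slabs times $O(k)$ feasible horizontal pairs, giving $O(k^3)$ search time, which dominates the $O(n + k\log k)$ preprocessing and yields the claimed $O(n + k^3)$ bound; all auxiliary lists have size $O(k) \subseteq O(n)$, so the space is $O(n)$. The main obstacle is the bookkeeping behind the per-slab sweep: I must argue that the candidate top and bottom positions (the extreme points inside the current slab) can be identified and traversed so that the work per slab is genuinely $O(k)$, and that the monotone relationship between the two lines holds exactly, so no feasible configuration is skipped and the global optimum is certified. The remaining ingredients — the $O(1)$ area evaluation and the running minimum — are routine constant-time updates.
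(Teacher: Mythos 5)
Your proposal is correct and follows essentially the same approach as the paper: reduce to the four sorted lists of $(k+1)$-extreme points, enumerate the $O(k^2)$ vertical slabs from $L(P)\times R(P)$, and exploit the monotone coupling between the top and bottom lines to examine only $O(k)$ horizontal pairs per slab, for $O(n+k^3)$ total time. The details you add (every outlier is itself extreme, rank-based counting of outside-slab outliers) are exactly the bookkeeping the paper leaves implicit, so there is no substantive difference.
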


	Note that this approach leads to the same running time we would get by simply bootstrapping any other existing rectangle covering algorithm~\cite{aiks-fkpmd-91,sk-ekpsa-98} to the set $E(P)$, which has independently been done in~\cite{abcmmpsw-aoor-09}. Note further that for the case of squares, it is possible to reduce the number of vertical slabs that need to be examined to $O(k)$ only, which would lead to a total running time of $O(n + k^2)$.


	The \textsc{$(p,k)$-Rectangle Covering} problem for $p\in \{2,3\}$ can be solved with the same recursive approach as the according $(p,k)$-\textsc{Square Covering} problem, and by using the $(1,k)$-\textsc{Rectangle Covering} algorithm described above as base case. The running times change as follows.
\begin{theorem}\label{theo:rect1}
  	Given a set $P$ of $n$ points in the plane, we can solve the
 	the \textsc{$(2,k)$-Rectangle Covering} problem in $O(n\log n+k^4\log k)$ time, and the \textsc{$(3,k)$-Rectangle Covering} problem in $O(n\log n+k^5\log^2 k)$ time. In both cases we use $O(n)$ space.
\end{theorem}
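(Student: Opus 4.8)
Here is how I would prove Theorem~\ref{theo:rect1}. The plan is to reuse verbatim the divide-and-conquer scheme already developed for squares, changing only the base case. As noted in the text, Lemma~\ref{lem:key1}, Observations~\ref{obs:key2} and~\ref{obs:key3}, and the monotonicity of Corollary~\ref{cor:monotone} all carry over to rectangles, so the recursive structure is identical. The only difference is that the innermost \textsc{$(1,k')$-Square Covering} call is replaced by the deterministic \textsc{$(1,k')$-Rectangle Covering} routine, which runs in $O(k^3)$ time once the four sorted extreme lists $T,B,L,R$ are available; consequently the resulting algorithms are deterministic rather than expected-time. As preprocessing I sort $P$ by both coordinates and build the segment-dragging structure of Lemma~\ref{lem:segdrag}, all in $O(n\log n)$ time and $O(n)$ space. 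Note that each invocation of Corollary~\ref{cor:sdq} already produces the extreme points in sorted order, so they feed directly into the $O(k^3)$ base case.

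For the \textsc{$(2,k)$-Rectangle Covering} problem I would, for each fixed $k'$ with $0\le k'\le k$, binary-search on the index $m$ of the vertical separating line, which is valid by Corollary~\ref{cor:monotone}; this yields $O(k\log n)$ configurations. For each configuration I extract the sorted lists $E(P_L(m))$ and $E(P_R(m))$ in $O(k\log n)$ time via Corollary~\ref{cor:sdq}, then solve the two base cases in $O(k^3)$ time each, for a per-configuration cost of $O(k^3+k\log n)$. The total beyond preprocessing is $O(k^4\log n + k^2\log^2 n)$, so altogether $O(n\log n + k^4\log n + k^2\log^2 n)$. This collapses to $O(n\log n+k^4\log k)$ by the usual two-case argument: if $k^4\le n$ then $k^4\log n\le n\log n$ and $k^2\log^2 n\le\sqrt n\log^2 n=O(n)$, while if $n<k^4$ then $\log n=O(\log k)$, so both stray terms are absorbed.

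For \textsc{$(3,k)$-Rectangle Covering} I invoke Observation~\ref{obs:key3}: fix the number $k'$ of outliers in the one-box halfplane, binary-search on $m$ to obtain $O(k\log n)$ top-level configurations, and for each solve a \textsc{$(1,k')$-Rectangle Covering} on the small side (whose cost is subsumed) and a \textsc{$(2,k-k')$-Rectangle Covering} on the large side. The inner $(2,\cdot)$ call reuses the global segment-dragging structure, queried within the current halfplane, together with the global $x$-sorted list of $P$ for its own binary search, so no preprocessing is repeated and, by the previous paragraph, each such call costs $O(k^4\log n)$. The total is $O(k\log n)\cdot O(k^4\log n)=O(k^5\log^2 n)$, i.e. $O(n\log n+k^5\log^2 n)$. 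Distinguishing $k^{10}\le n$ from $n<k^{10}$ turns this into $O(n\log n+k^5\log^2 k)$: in the first case $k^5\log^2 n\le\sqrt n\log^2 n=O(n)$, and in the second $\log n=O(\log k)$. Since every level manipulates only $O(k)$-size extreme lists on top of the linear-space preprocessing, the space stays $O(n)$.

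The step I expect to be the main obstacle is the bookkeeping that keeps the preprocessing out of the recursion: I must argue carefully that every segment-dragging query and every inner binary search in the recursive $(2,\cdot)$ call can be answered against the global structures restricted to the current halfplane, rather than being rebuilt on the subset $P_R(m)$. If this reuse failed, an $O(n\log n)$ term would reappear at each level of the recursion and destroy the claimed $k^5$ bound. Everything else -- the per-configuration accounting and the two-case simplification of the $\log n$ factors into $\log k$ factors -- is routine.
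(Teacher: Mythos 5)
Your proposal is correct and follows exactly the paper's (largely implicit) argument: reuse the square-covering recursion with Lemma~\ref{lem:key1}, Observations~\ref{obs:key2} and~\ref{obs:key3}, and Corollary~\ref{cor:monotone} extended to rectangles, swap in the $O(k^3)$ \textsc{$(1,k)$-Rectangle Covering} base case, and reuse the global segment-dragging structure and sorted lists so that preprocessing is not repeated inside the recursion. Your per-configuration accounting ($O(k\log n)$ configurations times $O(k^3+k\log n)$ for $p=2$, and an extra $O(k\log n)$ factor for $p=3$) together with the two-case $\log n$ versus $\log k$ argument reproduces the stated bounds $O(n\log n+k^4\log k)$ and $O(n\log n+k^5\log^2 k)$, which the paper itself leaves as a one-sentence remark.
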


\section{Concluding remarks}

\begin{figure}
\begin{centering}
\includegraphics[width = 0.25\textwidth]{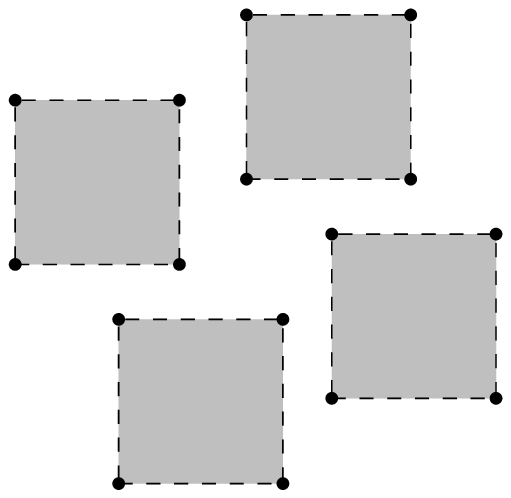}\hspace{1.5cm}
\includegraphics[width = 0.25\textwidth]{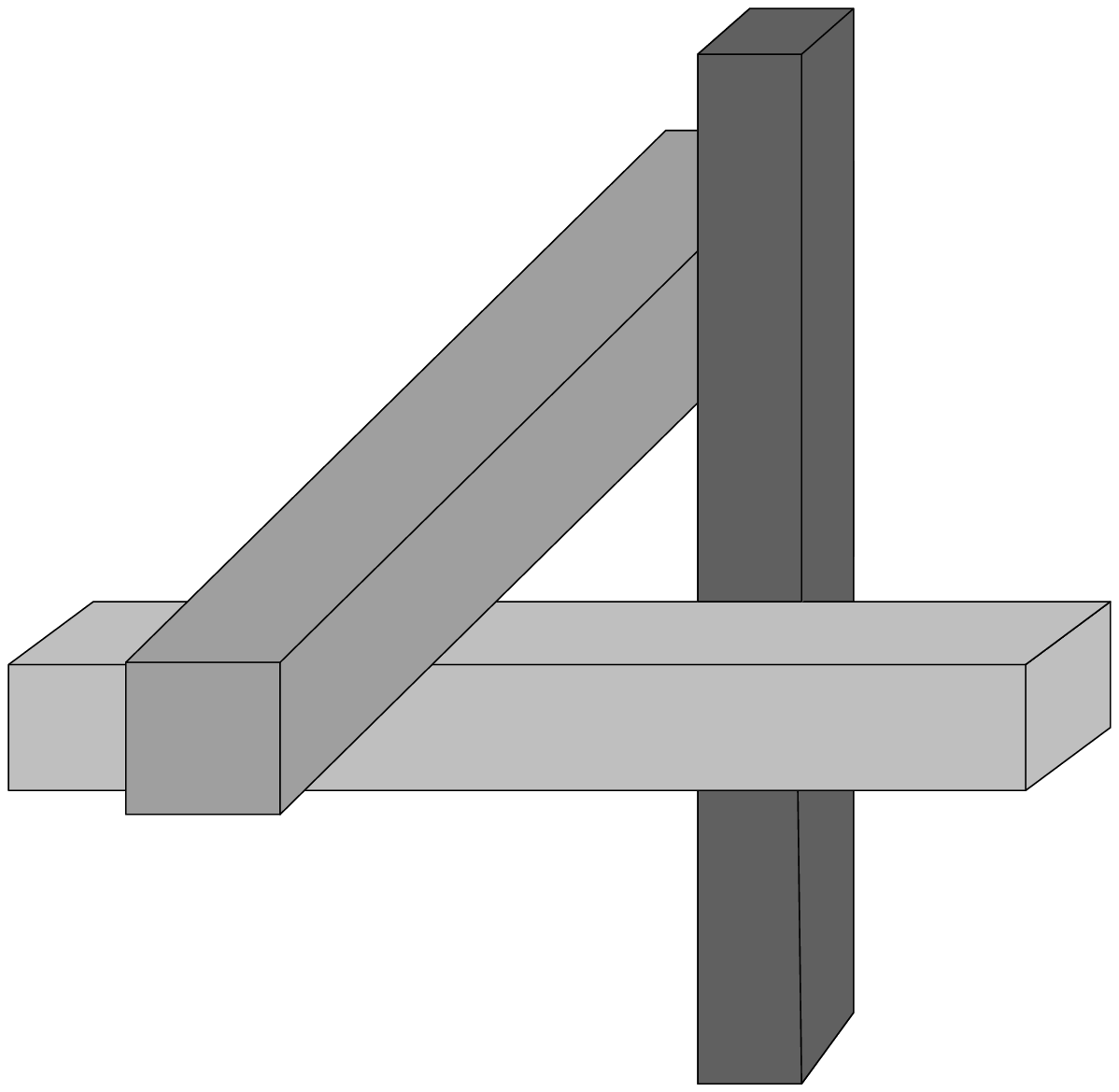}
\caption{Counterexamples. Left: In $\mathbb{R}^2$, no splitting line may exist for $p\geq 4$. Right: In $\mathbb{R}^3$, no splitting hyperplane may exist for $p\geq 3$.}\label{fig:counter}
\end{centering}
\end{figure}

	In this paper we have extended the well examined axis-aligned box covering problem to allow at most $k$ outliers.

	Our algorithms for $p\leq 3$ can be generalized to other functions than minimum area (e.g., minimizing the maximum perimeter of the boxes) as long as this function has some monotonicity property that allows us to solve the subproblems induced by the $p$ boxes independently.

	To solve the $(p,k)$-\textsc{Square Covering} problems we use the randomized technique of Chan~\cite{c-garot-99} as a subroutine, and thus our algorithms are randomized as well. Chan~\cite{c-garot-99} mentioned that his algorithm can be derandomized adding a logarithmic factor. Thus, our algorithms can also be made deterministic, adding an $O(\log k)$ factor to the second term of their running times, see the proof of our Theorem~\ref{thm:2ksolution}.

	We can generalize all algorithms to higher dimensions where the partitioning line becomes a hyperplane. However, there is a simple example (see Figure~\ref{fig:counter}, right), showing that neither the \textsc{$(3,k)$-Square Covering}, nor the \textsc{$(3,k)$-Rectangle Covering} problem admits a partitioning hyperplane for $d>2$, hence our algorithm can only be used for $p=1,2$ in higher dimensions.

	Our algorithms do not directly extend to the case $p\geq 4$, as		Observation~\ref{obs:key2} does not hold for the general case, see Figure~\ref{fig:counter}, left. Although no splitting line may exist, there always exists a quadrant separating a single box from the others. This property again makes it possible to use recursion to solve any \textsc{$(p,k)$-Square Covering} or \textsc{$(p,k)$-Rectangle Covering} problem.

	A natural extension of our idea is to allow either arbitrarily oriented squares and rectangles or to allow them to overlap. Both appears to be difficult within our framework, as we make use of the set of $(k+1)$-extreme points, which is hard to maintain under rotations; also we cannot restrict our attention to only these points when considering overlapping squares or rectangles.\\

\textbf{Acknowledgements:} We thank Otfried Cheong, Joachim Gudmundsson, Stefan Langerman, and Marc Pouget for fruitful discussions on early versions of this paper, and Jean Cardinal for indicating useful references.

\bibliographystyle{abbrv}

\end{document}